\documentclass[final]{siamltex}

% Load packages
\usepackage{cite} % Make references as [1-4], not [1,2,3,4]
\usepackage{url}  % Formatting web addresses  
\usepackage[utf8]{inputenc} %unicode support
\urlstyle{rm}
 
 \usepackage{color}
 \usepackage[ruled,vlined,english,linesnumbered]{algorithm2e}
\usepackage{amssymb}
\usepackage{multirow}
\usepackage{float}
\usepackage{amsmath}
\usepackage{graphicx}

\newcommand{\NN}{\mathbb{N}}

\newcommand{\cP}{\mathcal{P}}

\newcommand{\cT}{\mathcal{T}}
\newcommand{\cS}{\mathcal{S}}
\newcommand{\pp}{\{1,\ldots ,p\}}

\newtheorem{observation}{Observation}

\newcommand{\leo}{}
\newcommand{\cs}{}
\newcommand{\steven}{}
\newcommand{\cyan}{}
\newcommand{\snew}{}
\newcommand{\sbsteven}{}

\parskip=1em
\setlength{\unitlength}{1mm}

%%%%%%%% BEGIN COMMENT %%%%%%%%%%

\newif\ifcomment\commentfalse

\def\commentOFF{\commentfalse}

\long\outer\def\bc#1\ec{\ifcomment \sloppy  {\textcolor{red}{#1}
 \fi }}

\long\outer\def\bg#1\eg{\ifcomment \sloppy  {\textcolor{green}{#1}
 \fi }}

\long\outer\def\bo#1\eo{\ifcomment \sloppy  {\textcolor{yellow}{#1}
 \fi }}

\long\outer\def\BC#1\EC{{\ifcomment \sloppy \par \textcolor{blue}{\#  \dotfill
{\textsc{#1}} \dotfill \#} \par \fi }}

\long\outer\def\BCF#1\ECF{{\ifcomment \sloppy \par \textcolor{red}{\#  \dotfill
{\textit{Frank: #1}} \dotfill \#} \par \fi }}

\long\outer\def\BCL#1\ECL{{\ifcomment \sloppy \par \textcolor{green}{\#  \dotfill
{\textsc{Leen: #1}} \dotfill \#} \par \fi }}

%\commentON
\commentOFF

\title{On Computing the Maximum Parsimony Score of a Phylogenetic Network}

\author{Mareike Fischer\thanks{Ernst-Moritz-Arndt University Greifswald, Department of Mathematics and Computer Science, Walther-Rathenau-Str. 47, 17487 Greifswald, Germany, \tt{email@mareikefischer.de}}. \and Leo van Iersel\thanks{Centrum Wiskunde \& Informatica (CWI), P.O. Box 94079, 1090 GB Amsterdam, The Netherlands, {\tt l.j.j.v.iersel@gmail.com}. Leo van Iersel was funded by a Veni grant of the Netherlands Organisation for Scientific Research (NWO).}
 \and Steven Kelk\thanks{Department of Knowledge Engineering (DKE), Maastricht University, P.O. Box 616, 6200 MD Maastricht, The Netherlands, \tt{steven.kelk@maastrichtuniversity.nl}.}
 \and Celine Scornavacca\thanks{Institut des Sciences de l'Evolution (ISEM, UMR 5554 CNRS), Universit\'e Montpellier~II, Place E. Bataillon - CC 064 - 34095 Montpellier Cedex 5, France, {\tt celine.scornavacca@univ-montp2.fr}. Celine Scornavacca was partially supported by the ANCESTROME project ANR-10-IABI-0-01.}
}

\begin{document}
  
\maketitle

\begin{abstract}
Phylogenetic networks are used to display the relationship among different species whose evolution is not treelike, which is the case, for instance, in the presence of hybridization events or horizontal gene transfers. Tree inference methods such as Maximum Parsimony need to be modified in order to be applicable to networks. In this paper, we discuss two different definitions of Maximum Parsimony on networks, ``hardwired'' and ``softwired'', and examine the complexity of computing them  given a network topology and a character. By exploiting a link with the problem \textsc{Multicut}, we show that computing the hardwired parsimony score for 2-state characters is polynomial-time solvable, while for characters with more states this problem becomes NP-hard but is still approximable and fixed parameter tractable in the parsimony score. On the other hand we show that, for the softwired definition, obtaining even weak approximation guarantees is already difficult for binary characters and restricted network topologies, and fixed-parameter tractable algorithms in the parsimony score are unlikely. On the positive side we show that computing the softwired parsimony score is fixed-parameter tractable in the level of the network, a natural parameter describing how tangled reticulate activity is in the network. Finally, we show that both the hardwired and softwired parsimony score can be computed efficiently using Integer Linear Programming. The software has been made freely available.\\
\end{abstract}

\begin{keywords} 
Phylogenetic trees, phylogenetic networks, parsimony, complexity, approximability, fixed-parameter tractability, software
\end{keywords}

\begin{AMS}
68W25, 05C20, 90C27, 92B10.
\end{AMS}

\pagestyle{myheadings}
\thispagestyle{plain}
\markboth{Fischer, Van Iersel, Kelk, Scornavacca}{On Computing the Maximum Parsimony Score of a Phylogenetic Network}

\section{Introduction} 
In phylogenetics, graphs are used to describe the relationships among different species. Traditionally, these graphs are trees, and biologists aim at reconstructing the so-called `tree of life', i.e. the tree of all living species \cite{treeoflife}. However, trees cannot display reticulation events such as hybridizations or horizontal gene transfers, which are known to play an important role in the evolution of certain species \cite{arnold_hybrid, koonin_HGT, bogart_hybrid, mcdaniel_HGT}. In such cases considering phylogenetic networks rather than trees is potentially more adequate, where in its broadest sense a phylogenetic network can simply be thought of as a graph (directed or undirected) with its leaves labelled by species \cite{HusonRuppScornavacca10,davidbook,Nakhleh2009ProbSolv}. \leo{Phylogenetic networks can also be useful in the absence of reticulate events, since such networks can represent uncertainty in the true (tree-shaped) phylogeny. Hence,} tree reconstruction methods, i.e. the methods used to infer the best tree from e.g. DNA or protein data, need to be adapted to networks. 

One of the most famous tree reconstruction methods is Maximum Parsimony \cite{EdwardsCavalli-Sforza64}. While this method has been shown to have drawbacks like statistical inconsistency in the so-called `Felsenstein zone' \cite{felsenstein_1978}, it is still widely used mainly due to its simplicity: Maximum Parsimony does not depend on a phylogenetic model and works in a purely combinatorial way. Moreover, for a given tree the optimal parsimony score can be found in polynomial time using the well-known Fitch algorithm~\cite{fitch_1971} or the Sankoff one~\cite{sankoff_75}. This problem of finding the optimal parsimony score for a given tree is often referred to as the ``small parsimony" problem. The ``big parsimony" problem, on the other hand, aims at finding the most parsimonious tree amongst all possible trees -- and this problem has been proven to be NP-hard. Note that the latter problem is a close relative of the classical \textsc{Steiner Tree} problem \cite{foulds_1982,AlonApproxMP}.

Recent studies have introduced extensions of the tree-based parsimony concept to phylogenetic networks \cite{nguyen2007reconstructing, jin2009parsimony,kannan2012maximum} and a biological case-study was presented by \cite{MPcaseStudy2007}. Basically, Maximum Parsimony on networks can be viewed in two ways: If one thinks of evolution as a tree-like process (but maybe with different trees for different \leo{parts of the genome}, all of which are represented by a single network), one can define the parsimony score of a character on a network as the score of the best tree inside the network. The other way of looking at Maximum Parsimony on networks is just the same as the Fitch algorithm's view on trees: One can try to find the assignment of states to internal nodes of the network such that the total number of edges that connect nodes in different states is minimized. While the first concept may be regarded as more biologically motivated, the second one is in a mathematical sense the natural extension of the parsimony concept to networks. Both concepts of parsimony on networks are considered in this manuscript, and we formally introduce them in Section~\ref{prelim_sec} as softwired and hardwired parsimony, respectively.

Given an alignment (e.g. DNA) and a criterion like Maximum Parsimony, several questions come to mind: How hard is it to calculate the parsimony score (both in the hardwired and softwired sense) for a given network (``small parsimony" problem)? How hard is it to find the best network (``big parsimony" problem)? In the present paper we \leo{consider only} the
``small parsimony" problem. \leo{Moreover, we focus on computing the parsimony score of a single character, since the parsimony score of an alignment can be computed by summing up the parsimony scores of the individual charactes.} %, which in the tree case can easily be solved using the Fitch algorithm.

\leo{Kannan and Wheeler introduced the hardwired parsimony score for networks and conjectured that it would be NP-hard to compute~\cite{kannan2012maximum}.} We show in Section~\ref{hardwired_sec} that this problem is indeed NP-hard and APX-hard (Corollary \ref{hardwired_NP_cor}) whenever characters employing more than two states are used, but we also show that it is polynomial-time solvable for binary characters (Corollary~\ref{hardwired_bin_cor}). We also analyse the behaviour of \leo{the algorithm in~\cite{kannan2012maximum}, which we call the \textsc{ExtendedFitch} algorithm}, showing that it does not compute the hardwired parsimony score \leo{optimally and that it does not approximate} the softwired parsimony score well.
 
In Section~\ref{softwired_sec}, \leo{we consider the complexity of computing the softwired parsimony score. Previously, this problem was shown to be NP-hard and APX-hard but only for nonbinary networks with outdegree at most~20~\cite{jin2009parsimony}. We show in Theorem~\ref{thm:rootedhard} that the softwired parsimony problem is NP-hard even for binary networks (and binary characters) and we additionally show that NP-hardness cannot be overcome by considering only so-called binary tree-child time-consistent networks (Theorem \ref{softwired_treechild_thm}). Moreover, we show a much stronger inapproximability than the APX-hardness shown in~\cite{jin2009parsimony}. We show that, for any constant $\epsilon > 0$, an approximation factor of $|X|^{1-\epsilon}$ is not possible in polynomial time, unless P~$=$~NP, where $|X|$ denotes the number of species under investigation. Moreover,} this holds even for tree-child time-consistent networks (Theorem \ref{thm:sat}). Our inapproximability result shows that a trivial approximation factor of $|X|$ is in a certain sense the best that is possible in polynomial time.  
\leo{For binary networks we show a slightly weaker inapproximability threshold:  $|X|^{\frac{1}{3}-\epsilon}$ (Theorem \ref{thm:sat2}).} We note that the hardness results \leo{in}~\cite{nguyen2007reconstructing} are not directly comparable \leo{to our results}, since they adopt the \emph{recombination network} model of phylogenetic networks (see e.g. \cite{HusonRuppScornavacca10,twotrees} and also the discussion \leo{in} \cite{jin2009parsimony}).

While \leo{we show that} the hardwired parsimony score \leo{is} fixed parameter tractable with the parsimony score as parameter (Corollary~\ref{hardwired_FPT_cor}), \leo{we show that} the softwired parsimony score is not, unless P~$=$~NP. Indeed, we show \leo{(in Corollary~\ref{cor:notfpt})} that it is even NP-hard to determine whether the softwired parsimony score is \leo{equal to one (see \cite{Flum2006} and \cite{niedermeier2006} for an introduction to fixed parameter tractability). On the positive side,} we show in Section~\ref{sec:FPT} (in Theorem~\ref{softwired_levelFPT_thm}) that the softwired parsimony score is fixed parameter tractable in the \emph{level} of the network, \leo{a parameter describing} the maximum amount of reticulate activity in a biconnected component of the network (see e.g. \cite{KelkScornavacca2011} and \cite{elusiveness} for an overview). Moreover, in Section~\ref{ilp_sec} we present an Integer Linear Program to calculate both the softwired and hardwired parsimony score of a given character (or, more generally, multiple sequence alignment) on a phylogenetic network and give a preliminary analysis of its performance. This is the first practical exact method for computation of parsimony on medium to large networks, supplementing the heuristics given by \cite{jin2009parsimony} and \cite{kannan2012maximum}. An implementation of this program is freely available~\cite{MPNet}.

Finally, in Section~\ref{conclusion_sec}, we summarize our results and state some open problems for future research.

\section{Preliminaries}\label{prelim_sec}
Let~$X$ be a finite set. An \emph{unrooted phylogenetic network} on~$X$ is a connected, undirected graph that has no degree-2 nodes and that has its degree-1 nodes (the leaves) bijectively labelled by the elements of~$X$. A \emph{rooted phylogenetic network} on~$X$ is a directed acyclic graph that has a single indegree-0 node (the root), no indegree-1 outdegree-1 nodes, and its outdegree-0 nodes (the leaves) bijectively labelled by the elements of~$X$. We identify each leaf with its label. The indegree of a node~$v$ of a rooted phylogenetic network is denoted~$\delta^-(v)$ and~$v$ is said to be a \emph{reticulation node} if~$\delta^-(v)\geq 2$. An edge~$(u,v)$ is called a \emph{reticulation edge} if~$v$ is a reticulation node and it is called a \emph{tree edge} otherwise. A proper subset $C\subset X$ is referred to as \snew{a} \emph{cluster} of $X$.

When we refer to a \emph{phylogenetic network}, it can be either rooted or unrooted. We use $V(N)$ and $E(N)$ to denote, respectively, the node and edge set of a phylogenetic network~$N$. To simplify notation, we use the notation $(u,v)$ for a directed as well as for an undirected edge between~$u$ and~$v$. A phylogenetic network is \emph{binary} if each node has total degree at most~3 and (in case of a rooted network) the root has outdegree~2 and all reticulations have outdegree~1.

The \emph{reticulation number} of a phylogenetic network~$N$ can be defined as $|E(N)|-|V(N)|+1$. Hence, the reticulation number of a rooted binary network is simply the number of reticulation nodes. A \emph{rooted phylogenetic tree} is a rooted phylogenetic network with no reticulation nodes, i.e. with reticulation number~0. A \emph{biconnected component} of a phylogenetic network is a maximal biconnected subgraph \leo{(i.e. a biconnected subgraph that is not contained in a larger biconnected subgraph)}. A phylogenetic network is said to be a \emph{level}-$k$ network if each biconnected component has reticulation number at most~$k$, \cs{and at least one biconnected component has reticulation number exactly~$k$}.

A rooted phylogenetic network~$N$ is said to be \emph{tree-child} if each non-leaf node has a child that is not a reticulation and~$N$ is said to be \emph{time-consistent} if there exists a ``time-stamp'' function~$t:V(N)\rightarrow\NN$ such that for each edge~$(u,v)$ holds that $t(u)=t(v)$ if~$v$ is a reticulation and $t(u)<t(v)$ otherwise \cite{BSS06}.

If~$F$ is a finite set and~$p\in\NN$, then a $p$-state \emph{character} on~$F$ is a function from~$F$ to~$\pp$. A $p$-state character is \emph{binary} if~$p=2$. Let~$\alpha$ be a $p$-state character on~$X$ and~$N$ a phylogenetic network on~$X$. Then, a $p$-state character~$\tau$ on~$V(N)$ is an \emph{extension} of~$\alpha$ to~$V(N)$ if $\tau(x)=\alpha(x)$ for all~$x\in X$. Given a $p$-state character~$\tau$ on~$V(N)$ and an edge~$e=(u,v)$ of~$N$, the \emph{change} $c_\tau(e)$ on edge~$e$ w.r.t.~$\tau$ is defined as:
\[
c_\tau(e) = \left\lbrace \begin{array}{l}
                 0 \text{ if } \tau(u)=\tau(v)\\
                 1 \text{ if } \tau(u)\neq\tau(v).
               \end{array}\right.
\]

The \emph{hardwired parsimony score} of a phylogenetic network~$N$ and a $p$-state character~$\alpha$ on~$X$ can be defined as:
\[
PS_\text{hw}(N,\alpha) = \min_{\tau} \sum_{e\in E(N)} c_\tau(e),
\]

where the minimum is taken over all extensions~$\tau$ of~$\alpha$ to~$V(N)$. 

Now, consider a phylogenetic network~$N$ on~$X$ and a phylogenetic tree~$T$ on~$X$, where either both~$N$ and~$T$ are rooted or both are not. We say that~$T$ is \emph{displayed} by~$N$ if~$T$ can be obtained from a subgraph of~$N$ by suppressing non-root nodes with total degree~2. For a rooted phylogenetic network~$N$, a \emph{switching} of $N$ is obtained by, for each reticulation node, deleting all but one of its incoming edges \cite{KelkScornavacca2011}. We denote the set of switchings of $N$ by $\cS(N)$. It can easily be seen that~$T$ is displayed by~$N$ if and only if~$T$ can be obtained from a switching of~$N$ by deleting indegree-0 outdegree-1 nodes, deleting unlabelled outdegree-0 nodes and suppressing indegree-1 outdegree-1 nodes. Let~$\cT(N)$ denote the set of all phylogenetic trees on~$X$ that are displayed by~$N$. The \emph{softwired parsimony score} of a phylogenetic network~$N$ and a $p$-state character~$\alpha$ on~$X$ can be defined as:
\[
PS_\text{sw}(N,\alpha) = \min_{T\in\cT(N)} \min_{\tau} \sum_{e\in E(T)} c_\tau(e),
\]

where the second minimum is taken over all extensions~$\tau$ of~$\alpha$ to~$V(T)$.

Note that both the hardwired and softwired parsimony score can be used for rooted as well as for unrooted networks although the softwired parsimony score might seem more relevant for rooted networks and the hardwired parsimony score for unrooted ones.

It can easily be seen that, if~$N$ is a tree, $PS_\text{hw}(N,\alpha) = PS_\text{sw}(N,\alpha)$. However, we now show that, if~$N$ is a network, the difference between the two can be arbitrarily large.

\begin{figure}[H]
    \centering
    \includegraphics[scale=0.5]{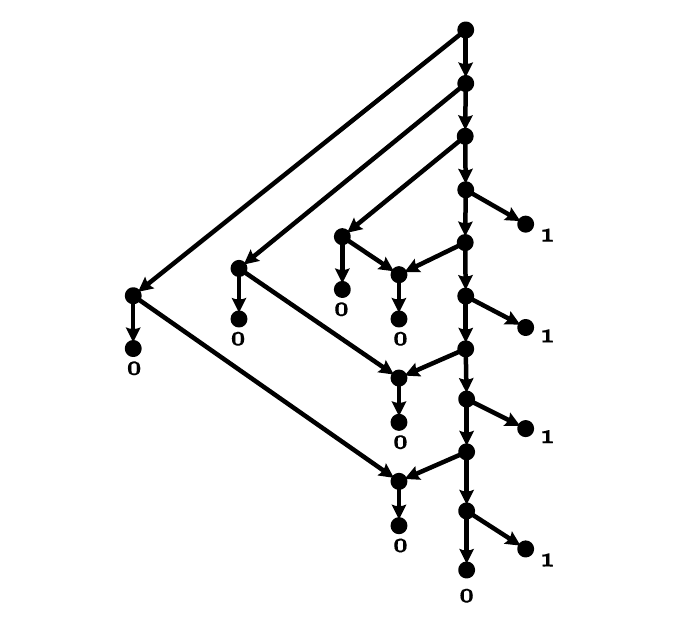}
    \caption{Example of a rooted phylogenetic network $N$ and binary character $\alpha$ for which the difference between $PS_\text{hw}(N,\alpha)$ and $PS_\text{sw}(N,\alpha)$ is arbitrarily large ($r-1$ if the network is extended to have $r$ reticulations and $n=3r+2$ leaves) and for which \textsc{ExtendedFitch} does not compute an $o(n)$-approximation of $PS_\text{sw}(N,\alpha)$.\label{fig:bad_fitch}}
\end{figure}

Figure~\ref{fig:bad_fitch} presents an example of a rooted binary phylogenetic network~$N$ and a binary character $\alpha$, where $PS_\text{sw}(N,\alpha) = 2$ regardless of the number of reticulation nodes in~$N$. This is due to the fact that all right-hand side parental edges of the reticulation nodes could be switched off, such that only one change from~0 to~1 would be required in the resulting tree on the edge just above all taxa labelled ``1'' and another change from~1 to~0 in the (0,1)-cherry. However, $PS_\text{hw}(N,\alpha)$ can be made arbitrarily large by extending the \snew{construction in the expected fashion}, as in this network we have $PS_\text{hw}(N,\alpha)=r+1$, where~$r$ denotes the number of reticulation nodes in~$N$. So the difference $PS_\text{hw}(N,\alpha)-PS_\text{sw}(N,\alpha)$ equals $r-1$, where~$r$ can be made arbitrarily large, which shows that $PS_\text{hw}(N,\alpha)$ is not an $o(n)$-approximation of $PS_\text{sw}(N,\alpha)$, where $n$ is the number of taxa. \snew{Note that the construction
shown in Figure \ref{fig:bad_fitch} is binary, tree-child and time-consistent}.

\subsection{Extended Fitch Algorithm}\label{subsec:fitch}
The hardwired parsimony score for rooted networks has been introduced by \cite{kannan2012maximum} who proposed a heuristic by extending the well-known Fitch algorithm for trees. We will call this algorithm \textsc{ExtendedFitch}. \leo{Its description can be found in Algorithm~3 and~4 of~\cite{kannan2012maximum}.} We show in this section that \textsc{ExtendedFitch} does not compute the hardwired parsimony score optimally and does not provide a good approximation for the softwired parsimony score.

The example from Figure~\ref{fig:bad_fitch} can be used again, in order to show that \textsc{ExtendedFitch} does not compute an $o(n)$-approximation of the softwired parsimony score. Indeed, \textsc{ExtendedFitch} gives all internal nodes character state~0, leading to a total score of $r+1 = (n+1)/3$ (which is in this case indeed equal to the hardwired parsimony score), while we already showed that $PS_\text{sw}(N,\alpha)=2$. Hence, the approximation ratio of \textsc{ExtendedFitch} is at least $(r+1)/2$ as a function of~$r$ and at least $(n+1)/6$ as a function of~$n$.

\steven{Note that Theorem \ref{thm:sat} is furthermore complexity-theoretic evidence that \textsc{ExtendedFitch}, a polynomial-time algorithm, cannot approximate the softwired parsimony score of a network well} (unless~P~=~NP).

\begin{figure}[H]
    \centering
   \includegraphics[scale=0.5]{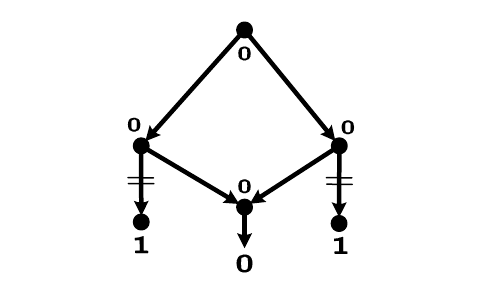} \includegraphics[scale=0.5]{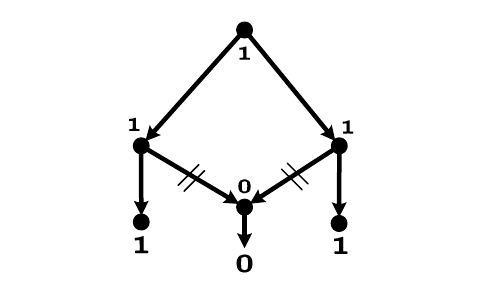}
    \caption{Example of a rooted phylogenetic network $N$ and binary character $\alpha$ for which \textsc{ExtendedFitch} does not provide the optimal parsimony score $PS_\text{hw}(N,\alpha)$. The small numbers refer to the two possible internal labellings as suggested by \textsc{ExtendedFitch}. Both require two changes on the marked edges. However, the optimal parsimony score is~1: If all internal nodes are labelled~1, then only one change is needed on the edge from the reticulation node to the leaf labelled~0. \label{fig:extendedfitch}}
\end{figure}

Next we show that \textsc{ExtendedFitch} does not compute $PS_\text{hw}(N,\alpha)$ optimally, even if $\alpha$ is a binary character. Consider Figure~\ref{fig:extendedfitch}. This figure displays a rooted phylogenetic network~$N$ with three leaves, one of which is directly connected to the only reticulation node. \snew{The network is again binary, tree-child and time-consistent.} \textsc{ExtendedFitch} fixes the reticulation node to be in state~0, whereas all other internal nodes can be either~0 or~1. The two optimal solutions found by \textsc{ExtendedFitch} are illustrated by Figure~\ref{fig:extendedfitch}; they either uniformly set all internal nodes other than the reticulation node to state~0 or to state~1. Thus, the resulting score is~2, as either both pending edges leading to the leaves labelled~1 need a change or both edges leading to the reticulation node. The most parsimonious solution, however, would be to set all internal nodes -- including the reticulation node -- to state~1. This way, only one change on the edge from the reticulation node to its pending leaf would be required. Therefore, \textsc{ExtendedFitch} cannot be used to calculate the hardwired parsimony score of a character on a network.

\subsection{\cyan{A Comment on Model Differences}}
Our \snew{core} definition \snew{of phylogenetic network is slightly} less restricted than the definitions given by \cite{jin2009parsimony} and \cite{kannan2012maximum}. 
\snew{However, the strong hardness and inapproximability results we give in this article still hold under heavy topological and biological restrictions (degree restrictions, tree-child, time-consistent) that are often subsumed into the core definitions given in other articles.}
Moreover, an obvious advantage of our definition is that all the positive results in the article apply to the largest possible class of phylogenetic networks. 

\section{Computing the Hardwired Parsimony Score of a Phylogenetic Network}\label{hardwired_sec} 

Given an undirected graph~$G$ and a set~$\Gamma$ of nodes of~$G$ called \emph{terminals}, a \emph{multiterminal cut} of~$(G,\Gamma)$ is a subset~$E'$ of the edges of~$G$ such that each terminal is in a different connected component of the graph obtained from~$G$ by removing the edges of~$E'$. A \emph{minimum multiterminal cut} is a multiterminal cut of minimum size. The following theorem shows that computing the hardwired parsimony score of a phylogenetic network is at most as hard as \textsc{Multiterminal Cut}, the problem of finding a minimum multiterminal cut.

\begin{theorem}\label{thm:multicut}
Let~$N$ be a  phylogenetic network on~$X$ and~$\alpha$ a $p$-state character on~$X$. Let~$G$ be the graph obtained from~$N$ by merging all leaves~$x$ with~$\alpha(x)=i$ into a single node~$ \gamma_i$, for~$i=1,\ldots ,p$. Then, the size of a minimum multiterminal cut of $(G,\{ \gamma_1,\ldots, \gamma_k\})$ is equal to $PS_\text{hw}(N,\alpha)$.
\end{theorem}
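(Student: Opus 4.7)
The plan is to establish equality by proving two inequalities, each via a direct construction that converts one object (extension / multiterminal cut) into the other while preserving the relevant size. The key observation is that, because every leaf of~$N$ has degree~$1$, no two leaves are adjacent in~$N$, so the merging operation that produces~$G$ creates neither self-loops nor identifies edges; hence $E(G)$ is in natural bijection with $E(N)$.

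For the inequality $PS_\text{hw}(N,\alpha)\geq$ (minimum multiterminal cut size), I would take an optimal extension~$\tau$ of~$\alpha$ to~$V(N)$ and let $E'_\tau=\{e\in E(N):c_\tau(e)=1\}$, viewed as a subset of $E(G)$. The claim is that $E'_\tau$ is a multiterminal cut of $(G,\{\gamma_1,\ldots,\gamma_p\})$. Suppose for contradiction that after deleting $E'_\tau$ from~$G$ there is a path joining $\gamma_i$ to $\gamma_j$ for some $i\neq j$. This path lifts to a walk in~$N$ between some leaf~$x$ with $\alpha(x)=i$ and some leaf~$y$ with $\alpha(y)=j$, all of whose edges satisfy $c_\tau(e)=0$, i.e.\ have equal $\tau$-values at their endpoints. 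Propagating along this walk forces $\tau(x)=\tau(y)$, contradicting $i\neq j$. Hence $|E'_\tau|\geq$ the minimum multiterminal cut size, and $\sum_{e}c_\tau(e)=|E'_\tau|$ gives the inequality.

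For the reverse inequality, I would take a minimum multiterminal cut~$E'$ of $(G,\{\gamma_1,\ldots,\gamma_p\})$ and construct an extension~$\tau$ of~$\alpha$ whose cost is at most $|E'|$. Let $C_1,\ldots,C_m$ be the connected components of $G-E'$; by definition each~$\gamma_i$ lies in a distinct component. For an internal node~$v$ of~$N$, let $C(v)$ be its component in $G-E'$; set $\tau(v)=i$ if $\gamma_i\in C(v)$, and otherwise assign~$\tau(v)$ an arbitrary fixed value in $\pp$ (components without a terminal are coloured monochromatically). For leaves put $\tau(x)=\alpha(x)$, which is consistent with the rule since each~$\gamma_i$ is in its own component. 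Then for every edge $e=(u,v)\in E(N)\setminus E'$, its image in~$G$ lies inside a single component, so $\tau(u)=\tau(v)$ and $c_\tau(e)=0$. Consequently $\sum_{e}c_\tau(e)\leq |E'|$, giving $PS_\text{hw}(N,\alpha)\leq|E'|$.

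The two inequalities combine to give the claimed equality. The only real subtlety is the second construction, specifically the handling of components of $G-E'$ that contain no terminal: one needs to verify that assigning such components any constant colour still produces a valid extension of~$\alpha$ (it does, because the only constraint is on leaves, which sit in terminal-containing components) and does not introduce any additional changes beyond those already in~$E'$. Once this is noted, the argument is essentially bookkeeping.
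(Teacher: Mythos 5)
Your proposal is correct and follows essentially the same two-inequality argument as the paper: an optimal extension yields a multiterminal cut of the same size, and a minimum multiterminal cut yields an extension by colouring each component with its terminal's state (arbitrary state for terminal-free components, where the paper simply uses~$p$). Your extra remark that $E(G)$ is in bijection with $E(N)$ is a harmless refinement of the same proof.
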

\begin{proof}
First consider an extension~$\tau$ of~$\alpha$ to~$V(N)$ for which $PS_\text{hw}(N,\alpha) = \sum_{e\in E(N)} c_\tau(e)$ (i.e. an optimal extension). Let~$E'$ be the set of edges~$e$ with $c_\tau(e)=1$. Since $\tau( \gamma_i)=i$ for $i=1,\ldots,p$, any path from~$ \gamma_i$ to~$ \gamma_j$ with~$i\neq j$ contains at least one edge of~$E'$. Hence,~$E'$ is a multiterminal cut. Moreover, $PS_\text{hw}(N,\alpha) = \sum_{e\in E(N)} c_\tau(e) = |E'|$. Hence, $PS_\text{hw}(N,\alpha)$ is greater \snew{than} or equal to the size of a minimum multiterminal cut.

Now consider a minimum multiterminal cut~$E'$ of~$G$ and let~$G'$ be the result of removing the edges in~$E'$ from~$G$. We define an extension~$\tau$ of~$\alpha$ to~$V(N)$ as follows. First, we set $\tau(x)=\alpha(x)$, for all~$x\in X$. Then, for each node~$v$ that is in the same connected component of~$G'$ as~$ \gamma_i$, set $\tau(v)=i$. Finally, for each remaining node, set~$\tau(v)=p$. Then, each edge~$e\notin E'$ has $c_\tau(e)=0$. Consequently, each edge~$e$ with~$c_\tau(e)=1$ is in~$E'$. Hence, $PS_\text{hw}(N,\alpha) \leq \sum_{e\in E(N)} c_\tau(e) \leq |E'|$. It follows that $PS_\text{hw}(N,\alpha)$ is less or equal to the size of a minimum multiterminal cut, which concludes the proof.
\end{proof}
\begin{corollary} \label{hardwired_bin_cor}
Computing the hardwired parsimony score of a   phylogenetic network and a binary character is polynomial-time solvable.
\end{corollary}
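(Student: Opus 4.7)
The plan is to appeal directly to Theorem~\ref{thm:multicut} and then observe that in the binary case the resulting \textsc{Multiterminal Cut} instance degenerates into a classical minimum $s$-$t$ cut, which is solvable in polynomial time.

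More precisely, given a phylogenetic network $N$ on $X$ together with a binary character $\alpha : X \to \{1,2\}$, I would first construct the graph $G$ described in Theorem~\ref{thm:multicut} by merging all leaves with $\alpha(x) = 1$ into a single node $\gamma_1$ and all leaves with $\alpha(x) = 2$ into a single node $\gamma_2$. By Theorem~\ref{thm:multicut}, the hardwired parsimony score $PS_\text{hw}(N,\alpha)$ equals the size of a minimum multiterminal cut of $(G,\{\gamma_1,\gamma_2\})$. But with only two terminals, a multiterminal cut is precisely an edge set whose removal disconnects $\gamma_1$ from $\gamma_2$, i.e.\ an $s$-$t$ edge cut with $s = \gamma_1$ and $t = \gamma_2$.

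A minimum $s$-$t$ edge cut can be computed in polynomial time via any standard max-flow algorithm, applied either to the undirected graph $G$ (in the unrooted case) or to the directed graph $G$ with unit capacities on every edge (in the rooted case); both variants are classical and reduce to max-flow/min-cut. The construction of $G$ from $N$ takes linear time, so the whole procedure runs in polynomial time, yielding the corollary.

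The only point worth a brief sanity check is that the reduction in Theorem~\ref{thm:multicut} is stated for phylogenetic networks that may be rooted or unrooted, and the multiterminal cut problem is typically phrased on undirected graphs; however, the proof of Theorem~\ref{thm:multicut} is blind to edge orientation (it only uses connectivity after deletion), so the argument applies equally to the rooted case, where the resulting two-terminal cut problem is still solvable by max-flow with unit capacities. I do not expect any real obstacle: the whole content of the corollary is the observation that $p = 2$ collapses \textsc{Multiterminal Cut} to a problem that has been known to be tractable since Ford and Fulkerson.
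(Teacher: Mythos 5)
Your proof is correct and is essentially the paper's own argument: invoke Theorem~\ref{thm:multicut} and observe that with two terminals \textsc{Multiterminal Cut} collapses to minimum $s$-$t$ cut, solvable by max-flow. One small caution: even for a rooted network the relevant cut is the \emph{undirected} one (since $c_\tau$ ignores edge orientation, any path between $\gamma_1$ and $\gamma_2$ must be severed regardless of direction), so one should always run max-flow on the underlying undirected graph rather than on the DAG with directed capacities.
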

\begin{proof}
This follows directly from Theorem~\ref{thm:multicut} because, in the case of two terminals, \textsc{Multiterminal Cut} becomes the classical minimum $s-t$-cut problem, which is polynomial-time solvable.
\end{proof}
\begin{corollary}\label{hardwired_NP_cor}
Computing the hardwired parsimony score of a   phylogenetic network and a $p$-state character, for~$p\geq 3$, is NP-hard and APX-hard.
\end{corollary}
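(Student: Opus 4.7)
The plan is to reduce \textsc{Multiterminal Cut} with $p\geq 3$ terminals to the hardwired parsimony problem. Theorem~\ref{thm:multicut} provides one direction (parsimony reduces to multiterminal cut); I need the reverse reduction. Since \textsc{Multiterminal Cut} is both NP-hard and APX-hard for $p\geq 3$ terminals by the classical result of Dahlhaus, Johnson, Papadimitriou, Seymour and Yannakakis, the corollary will follow.

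Given an instance $(G,\{\gamma_1,\ldots,\gamma_p\})$ of \textsc{Multiterminal Cut}, I would first preprocess $G$ by iteratively deleting degree-1 non-terminals and suppressing degree-2 non-terminals. Both operations preserve the value of a minimum multiterminal cut, and together they ensure that every non-terminal vertex has degree at least~3. Then I would construct a phylogenetic network $N$ and a $p$-state character $\alpha$ by attaching $K := |E(G)|+1$ new pendant leaves to each terminal $\gamma_i$, each labeled with state $i$ under $\alpha$. Every terminal now has degree at least $K\geq 3$, every non-terminal has degree at least~3 by preprocessing, and the only degree-1 nodes are the new labeled leaves, so $N$ is a valid phylogenetic network on the new label set.

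To relate $PS_\text{hw}(N,\alpha)$ to the multiterminal cut value, I would apply Theorem~\ref{thm:multicut}. The graph $G^*$ obtained by merging the leaves of state $i$ into a single node $\gamma_i'$ is exactly $G$ together with $K$ parallel edges between each $\gamma_i$ and the fresh vertex $\gamma_i'$. The key equality to verify is $\mathrm{MMC}(G^*)=\mathrm{MMC}(G)$. The direction $\mathrm{MMC}(G^*)\leq\mathrm{MMC}(G)$ is immediate, since any multiterminal cut of $G$ is also a multiterminal cut of $G^*$ (each $\gamma_i'$ remains in the component of its $\gamma_i$). For the reverse, any minimum multiterminal cut $E^*$ of $G^*$ cannot fully sever a bundle of $K$ parallel edges, since that would force $|E^*|\geq K>|E(G)|\geq \mathrm{MMC}(G)$, contradicting the first inequality; hence each $\gamma_i'$ stays in the component of $\gamma_i$, and $E^*\cap E(G)$ is a multiterminal cut of $G$ of size at most $|E^*|$.

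The main technical obstacle is preventing the reduction from being trivialized by cutting the bundles of parallel edges introduced through merging; this is handled by choosing $K>|E(G)|$ so that severing an entire bundle is prohibitively expensive. A secondary issue, the validity of $N$ as a phylogenetic network, is addressed by the preprocessing step. Since the reduction yields $PS_\text{hw}(N,\alpha)=\mathrm{MMC}(G,\Gamma)$ exactly, it is approximation-preserving: any $c$-approximation for hardwired parsimony immediately gives a $c$-approximation for \textsc{Multiterminal Cut}. Consequently both NP-hardness and APX-hardness of \textsc{Multiterminal Cut} for $p\geq 3$ transfer to the hardwired parsimony problem whenever $p\geq 3$ character states are allowed.
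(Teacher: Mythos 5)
Your reduction is correct in substance but takes a genuinely different route from the paper's. The paper \emph{deletes} each terminal $\gamma_i$ and replaces each incidence $(\gamma_i,v_i^j)$ by a pendant leaf $x_i^j$ with $\alpha(x_i^j)=i$; the merging step of Theorem~\ref{thm:multicut} then reconstructs $\gamma_i$ with exactly its original adjacencies, so the merged graph \emph{is} the preprocessed $G$ and the equality $PS_\text{hw}(N,\alpha)=\mathrm{MMC}(G,\Gamma)$ is immediate, with no need to reason about expensive edge bundles. You instead \emph{keep} each terminal as an internal node and hang $K=|E(G)|+1$ identically-labelled leaves off it, which forces you to prove the auxiliary claim $\mathrm{MMC}(G^*)=\mathrm{MMC}(G)$ via the ``a minimum cut cannot afford to sever a bundle of $K$ parallel edges'' argument. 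That argument is sound, and it buys you something the paper has to handle explicitly: you never need to delete edges between adjacent terminals, since in $G^*$ the terminals of the new instance are the merged leaf-nodes $\gamma_i'$ rather than the $\gamma_i$ themselves. The price is a blown-up instance and an extra lemma. Both reductions are exact and hence approximation-preserving, so both yield NP-hardness and APX-hardness from \cite{Dahlhaus94thecomplexity}.

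Two loose ends you should tie up. First, suppressing a degree-2 non-terminal whose neighbours are already adjacent creates a multi-edge (and a degree-2 node whose two edges go to the same neighbour creates a self-loop); the resulting $N$ then violates the simple-graph assumptions implicit in the paper's definition of a phylogenetic network. The paper sidesteps this by keeping degree-2 nodes until the end and then applying the gadget in the Appendix, which removes degree-2 nodes and multi-edges without changing the hardwired score or the cut structure; you should either invoke that gadget or argue the multigraph case directly. Second, you silently assume $G$ is connected (so that $N$ is a single phylogenetic network) and say nothing about the rooted case; both are one-line remarks (restrict to connected \textsc{Multiterminal Cut} instances, or sum over components as the paper does, and note that the hardwired score ignores edge orientations), but they are needed for the corollary as stated.
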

\begin{proof}
We reduce from \textsc{Multiterminal Cut}, which is NP-hard and APX-hard \snew{for three or more terminals}~\cite{Dahlhaus94thecomplexity}.

Let~$G$ be an undirected graph and~$\Gamma=\{ \gamma_1,\ldots , \gamma_k\}$ a set of terminals. \snew{Note that feasible solutions to \textsc{Multiterminal Cut} must contain all edges between adjacent terminals. For this reason we begin by removing such edges from $G$. Next we repeatedly delete all degree-1 nodes that are not terminals, until no such nodes are left, because the edges adjacent to such nodes cannot contribute to a multiterminal cut.}

 We construct a finite set~$X$ and a \snew{$k$}-state character~$\alpha$ on~$X$ as follows. For each terminal~$ \gamma_i$, and for each node~$v_i^j$ adjacent to~$ \gamma_i$ in~$G$, put an element~$x_i^j$ in~$X$ and set~$\alpha(x_i^j)=i$. Now we construct a new graph~$N$ from~$G$ by deleting each~$ \gamma_i$ and adding a leaf labelled~$x_i^j$ with an edge $(v_i^j,x_i^j)$, for each~$x_i^j\in X$.
%, and deleting any leaves that are not in~$X$.
\snew{Now, $N$ might contain degree-2 nodes, which are not permitted in our definition of phylogenetic network, but
as we explain in Appendix \ref{app:transf}, there is a simple transformation that removes such nodes without altering the
hardwired parsimony score or the cut properties of the graph. We apply this transformation to $N$ if necessary. Suppose then} that the resulting graph~$N$ is connected, and hence an unrooted phylogenetic network on~$X$. Then it follows from Theorem~\ref{thm:multicut} that the size of a minimum multiterminal cut of $(G,\Gamma)$ is equal to $PS_\text{hw}(N,\alpha)$.

Now suppose that~$N$ is not connected. Observe that the proof of Theorem~\ref{thm:multicut} still holds if~$N$ is not connected. Moreover, computing the hardwired parsimony score of a connected unrooted phylogenetic network is at least as hard as computing the hardwired parsimony score of a not-necessarily connected phylogenetic network, because we can sum the parsimony scores of the connected components. This reduction is clearly approximation-preserving. Finally we note that computing the hardwired parsimony score of a rooted phylogenetic network is just as hard as computing this score of an unrooted phylogenetic network because the hardwired parsimony score does not depend on the orientation of the edges.
\end{proof}
\begin{corollary} \label{hardwired_FPT_cor}
Computing the hardwired parsimony score of a   phylogenetic network and a $p$-state character is fixed-parameter tractable (FPT) in the parsimony score. Moreover, there exists a polynomial-time 1.3438-approximation for all~$p$ and a $\frac{12}{11}$-approximation for $p=3$.
\end{corollary}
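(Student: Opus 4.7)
The plan is to push everything through the reduction already established in Theorem~\ref{thm:multicut}. That theorem shows that, given $N$ and $\alpha$, we can build in polynomial time a graph~$G$ with terminal set $\Gamma=\{\gamma_1,\ldots,\gamma_p\}$ such that the minimum multiterminal cut of $(G,\Gamma)$ has size exactly $PS_\text{hw}(N,\alpha)$. Crucially, this is a tight value-preserving reduction: the parameter (the parsimony score) equals the size of the optimal cut, and any multiterminal cut in $(G,\Gamma)$ translates back into an extension~$\tau$ of the same cost via the colouring recipe used in the second half of that proof. Consequently, every FPT or approximation algorithm for \textsc{Multiterminal Cut} in $k$ terminals, parameterised by the cut size, transfers immediately to computing $PS_\text{hw}(N,\alpha)$, with $k=p$.

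For the FPT claim I would cite the known fixed-parameter tractability of \textsc{Multiterminal Cut} (equivalently, \textsc{Multiway Cut}) parameterised by the size of the cut: Marx showed this problem to be FPT in the cut size for any number of terminals, and improved algorithms have since been given by Chen, Liu and Lu and by Cygan \emph{et al}. Running such an algorithm on the graph~$G$ produced by Theorem~\ref{thm:multicut} yields an FPT algorithm in $PS_\text{hw}(N,\alpha)$. The number of terminals $p$ need not enter the parameter because these algorithms are FPT in the cut size alone. For the approximation statements I would invoke the best known polynomial-time approximation ratios for \textsc{Multiterminal Cut}: the general-$k$ bound of $1.3438$ due to Buchbinder, Schwartz and Weaver (building on the earlier geometric rounding framework of C\u{a}linescu, Karloff and Rabani and on Karger, Klein, Stein, Thorup and Young) and the $\frac{12}{11}$-approximation for three terminals, again due to Karger \emph{et al}. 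Applying these to $G$ and taking the resulting cut yields an extension of $\alpha$ whose cost is within the stated factor of $PS_\text{hw}(N,\alpha)$.

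The main obstacle, which is really only a bookkeeping point, is to verify that the reduction of Theorem~\ref{thm:multicut} is approximation-preserving in both directions. One direction is obvious (an optimal extension gives a cut of the same size). In the other direction the colouring recipe from Theorem~\ref{thm:multicut} turns any multiterminal cut~$E'$ into an extension~$\tau$ whose set of changed edges is \emph{contained in} $E'$, so its cost is at most $|E'|$; in particular a $\rho$-approximate cut gives a $\rho$-approximate extension. I would state this explicitly and then simply quote the three external results above to conclude all three parts of the corollary.
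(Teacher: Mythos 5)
Your proposal is correct and follows essentially the same route as the paper: everything is funnelled through the value-preserving equivalence of Theorem~\ref{thm:multicut} and then external results on \textsc{Multiterminal Cut} are quoted, with your explicit check that the colouring recipe makes the reduction approximation-preserving being exactly the point the paper relies on implicitly. The only (immaterial) difference is in the FPT step, where you invoke fixed-parameter tractability of \textsc{Multiterminal Cut} itself, whereas the paper first reduces it to \textsc{Multicut} over all terminal pairs and then applies the FPT algorithms for \textsc{Multicut}; both are valid, and your attribution of the $1.3438$ ratio should really go to Karger et al.\ (the paper's cited source) rather than to the later improvements you name.
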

\begin{proof}
The approximation results follow from the corresponding results on minimum multiterminal cut~\cite{KargerSTOC09} by Theorem~\ref{thm:multicut}.

For the fixed-parameter tractability, we use the corresponding result on the problem \textsc{Multicut}, which is defined as follows. Given a graph~$G$ and~$q$ terminal pairs $(\gamma'_1, \gamma_1),\ldots ,(\gamma'_q, \gamma_q)$, find a minimum-size subset~$E'$ of the edges of~$G$ such that there is no path from~$\gamma'_i$ to~$ \gamma_i$ for,~$i=1,\ldots ,q$, in the graph obtained from~$G$ by removing the edges of~$E'$. Clearly, \textsc{Multiterminal Cut} can be reduced to \textsc{Multicut} by creating a terminal pair $(\gamma,\gamma')$ for each combination of two terminals $\gamma,\gamma'\in \Gamma$ with~$\gamma\neq \gamma'$. Hence, since \textsc{Multicut} is fixed-parameter tractable in the size of the cut~\cite{MulticutFPT2,MulticutFPT1}, it follows by Theorem~\ref{thm:multicut} that computing the hardwired parsimony score of a phylogenetic network and a $p$-state character is fixed-parameter tractable in the parsimony score.
\end{proof}

\section{Complexity of Computing the Softwired Parsimony Score of a Rooted Phylogenetic Network}\label{softwired_sec}

\leo{In Section~\ref{subsec:exact}, we consider the complexity of computing the softwired parsimony score exactly. Subsequently, Section~\ref{subsec:approx} determines the complexity of approximating this score.}

\subsection{Complexity of Computing the Softwired Parsimony Score Exactly}\label{subsec:exact}

In the following, we show that computing the softwired parsimony score of a binary character on a binary rooted phylogenetic network is NP-hard. We reduce from \textsc{Cluster Containment}, which is known to be NP-hard for general \leo{networks~\cite{clustercontainment,ISS2010b}.} However, in order to prove our result for binary networks, we first need to show that \textsc{Cluster Containment} is NP-hard for binary phylogenetic networks, too; \cyan{this intermediate result has to the best of our knowledge not appeared earlier in the literature}. We do this via \textsc{Edge Cluster Containment} and \textsc{Binary Edge Cluster Containment} as described in the following. Thus, we state the following questions and analyse their \leo{complexity.}

\textsc{(Binary) Edge Cluster Containment}\\
\emph{Instance: } A set $X$ of taxa, a rooted \snew{(binary)} phylogenetic network $N$ (with edge set $E$ and node set $V$) on $X$, a cluster $C \subset X$ and an edge $e=(u,v) \in E$. \\
\emph{Question:} Is there a \snew{rooted phylogenetic} tree $T$ on $X$ displayed by $N$ containing the node $v$  such \snew{that} the taxa descending from $v$ in $T$ are precisely the taxa in $C$?\par

\leo{If} the answer is yes to the question above, we say that $e$ represents $C$. We denote by $\mathcal{C}(N)$ the set of clusters represented by edges in $E$.

\textsc{(Binary) Cluster Containment}\\
\emph{Instance: } A set $X$ of taxa, a rooted \snew{(binary)} phylogenetic network $N$ (with edge set $E$ and node set $V$) on $X$ and a cluster $C \subset X$. \\
\emph{Question:} Is there a \snew{rooted phylogenetic} tree $T$ on $X$ displayed by $N$ which contains an edge $e=(u,v)\in E$ such that the taxa descending from $v$ in $T$ are precisely the taxa in $C$?\par

The following \leo{observation follows directly by a simple (Turing) reduction from \textsc{Cluster Containment}.
\begin{observation}\label{obs:edgecluster}
\textsc{Edge Cluster Containment} is NP-hard.
\end{observation}}

Next we use the previous observation to prove the following \leo{small} result.

\begin{observation}\label{obs:edgeclusterbinary}
\textsc{Binary Edge Cluster Containment} is NP-hard.
\end{observation}
\begin{proof} We reduce from \textsc{Edge Cluster Containment}. Assume there is an algorithm $\mathcal{A}$ to decide \textsc{Binary Edge Cluster Containment} in polynomial time. Let $N$ be a phylogenetic network on $X$  containing an edge $e$. We want to know if $e$ represents a particular cluster $C$. Let $N^B$ be an arbitrary binary refinement of $N$. Note that $N^B$ contains all edges of $N$, and possibly some more (unless $N$ is already binary), \snew{in the sense
that edges in $N^B$ could be contracted to once again obtain $N$. Hence,} $e$ is contained in $N^B$, too. An example of a binary refinement of a \snew{nonbinary} network is depicted in Figure~\ref{fig:refinement}. So we can use $\mathcal{A}$ to decide if $e$ represents $C$ in $N^B$. Note that $e$ represents $C$ in $N^B$ if and only if $e$ represents $C$ in $N$, because it is easy to see that refining a network does not change the clusters pending on a particular edge. Therefore, this method would provide a polynomial-time algorithm to solve \textsc{Edge Cluster Containment}.
\end{proof}

\begin{figure}[H]
    \centering
    \includegraphics[scale=.6]{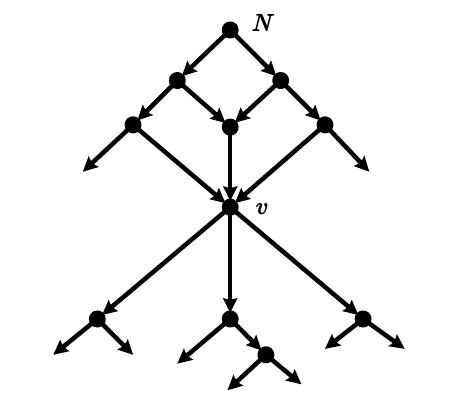} \includegraphics[scale=.6]{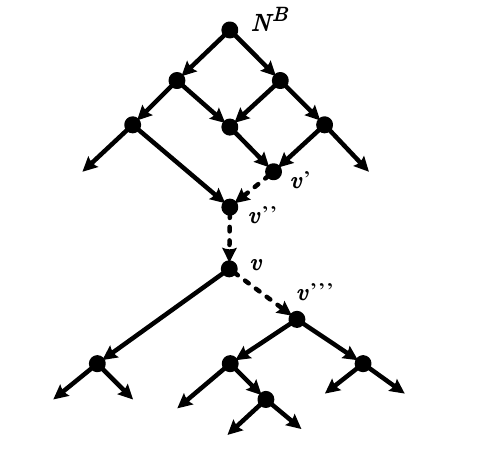}
    \caption{Illustration of a rooted phylogenetic network $N$ with a node $v$ of total degree 6 and a possible binary refinement $N^B$ of $N$, where three copies of $v$, namely $v'$, $v''$ and $v'''$, as well as three new edges (dashed lines) are inserted.\label{fig:refinement}}
\end{figure}

Now we are in a position to prove that \snew{\textsc{Binary Cluster Containment}} is NP-hard, which is the essential ingredient to our proof of Theorem~\ref{thm:rootedhard}.

\begin{lemma}\label{lem:clusterbinary}
\textsc{Binary Cluster Containment} is NP-hard.
\end{lemma}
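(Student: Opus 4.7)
The plan is to reduce from \textsc{Binary Edge Cluster Containment}, which is NP-hard by Lemma~\ref{lem:edgeclusterbinary}. Given an instance $(N, e=(u,v), C)$, I construct an instance $(N', C')$ of \textsc{Binary Cluster Containment} as follows. Subdivide $e$ by inserting two fresh internal nodes $w'$ and $w$, so that $e$ is replaced by the directed path $u \to w' \to w \to v$; then attach two new leaves $z, x$ (neither in $X$) via the edges $(w',z)$ and $(w,x)$. Thus $w'$ has children $\{w,z\}$ and $w$ has children $\{v,x\}$, both with indegree~$1$. The resulting $N'$ is a rooted binary phylogenetic network on $X' := X \cup \{x, z\}$ (no new reticulation is introduced, and all degrees are correct), and the construction is clearly polynomial. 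Set $C' := C \cup \{x\}$. It remains to show that $e$ represents $C$ in $N$ if and only if $N'$ displays $C'$.

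For the forward direction, suppose $e$ represents $C$ in $N$ via a displayed tree $T$ of $N$ in which the leaves below $v$ are exactly $C$. Since $w$ and $w'$ are tree nodes, the switching of $N$ underlying $T$ lifts canonically to a switching of $N'$, yielding a displayed tree $T'$ of $N'$. In $T'$, the children of $w$ are $v$ and $x$, so the leaves below $w$ in $T'$ are $C \cup \{x\} = C'$, and the edge $(w', w)$ witnesses that $N'$ displays $C'$.

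For the reverse direction, assume $N'$ displays $C'$: there is a displayed tree $T'$ of $N'$ containing an edge $(u^*, v^*)$ whose descendant leaves in $T'$ are exactly $C'$. Since $x \in C'$ and $w$ is the unique parent of $x$ in $N'$, $v^*$ lies on the root-to-$w$ path in $T'$, so either $v^* = w$ or $v^*$ is a strict ancestor of $w$ in $T'$. The second case is impossible: because $w$ and $z$ share the common unique parent $w'$ in $N'$, they have identical ancestor sets in any displayed tree $T'$; hence any strict ancestor of $w$ in $T'$ is also an ancestor of $z$, which would force $z$ to be a descendant leaf of $v^*$ and thus $z \in C'$, contradicting $z \notin X \cup \{x\} \supseteq C'$. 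Therefore $v^* = w$, so the leaves below $w$ in $T'$ are $C'$, and since the children of $w$ in $T'$ are $v$ and $x$ (noting that if $v$ is a reticulation, the assumption $|C| \geq 1$ forces the switching to keep the edge $(w,v)$), the leaves below $v$ in $T'$ are $C' \setminus \{x\} = C$. Restricting the switching of $N'$ back to $N$ yields a displayed tree $T$ of $N$ in which the leaves below $v$ are exactly $C$, so $e$ represents $C$ in $N$.

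The only real obstacle is the reverse direction: an a priori arbitrary edge of some displayed tree of $N'$ could witness that $N'$ displays $C'$. The role of the auxiliary leaf $z$ is precisely to rule out every candidate for $v^*$ other than $w$, by making $z$ and $w$ siblings under the common unique parent $w'$ so that they have identical ancestor sets in every displayed tree of $N'$.
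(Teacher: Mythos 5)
Your proof is correct and takes essentially the same approach as the paper: both reductions subdivide the edge $e$ with two new internal nodes, hang a fresh leaf off each, and test containment of $C$ together with the lower new leaf (your $x$, the paper's $h_1$), using the upper leaf (your $z$, the paper's $h_2$) to force the witnessing edge to be the middle subdivision edge. You merely spell out the forward and reverse directions in more detail than the paper does.
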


\begin{proof} We reduce from \textsc{Binary Edge Cluster Containment}. Let $N^B$ be a rooted binary phylogenetic network  on $X$ and $C$ a cluster of $X$. Assume there is \snew{an} algorithm $\mathcal{A}$ to answer \textsc{Binary Cluster Containment} in polynomial time. Let $e=(v,u)$ be an edge in $N^B$. We add two new nodes $v_1$, $v_2$ \leo{to $N^B$ as follows}: Subdivide $e$ into three edges $e_1:=(v,v_1)$, $e_2:=(v_1,v_2)$ and $e_3:=(v_2,u)$. Now introduce two new edges $e_4:=(v_1,h_2)$ and $e_5:=(v_2,h_1)$, \snew{where} $h_1$ and $h_2$ are two new taxa. We call the resulting modified network \snew{$\tilde{N}^B$}. An example of this transformation  is depicted in Figure \ref{fig:edgesubdivision}. 

\begin{figure}[H]
    \centering
    \includegraphics[scale=0.7]{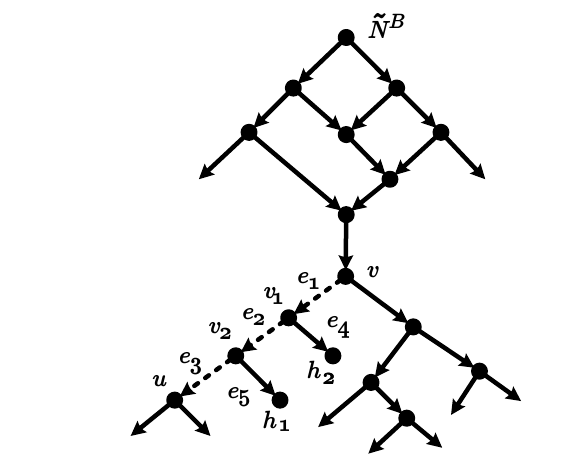} 
    \caption{Illustration of the modifications applied to $N^B$ as depicted by Figure \ref{fig:refinement}, resulting in the modified binary network \snew{$\tilde{N}^B$.}\label{fig:edgesubdivision}}
\end{figure}

Note that by construction, \snew{$\tilde{N}^B$} is binary. We now use algorithm $\mathcal{A}$ to decide in polynomial time if \snew{$\tilde{N}^B$} contains the cluster $C \cup h_1$. Note that this is the case if and only if \snew{$e_2$} in \snew{$\tilde{N}^B$} represents $C$, which, by construction, is the case if and only if $e$ represents $C$ in $N^B$. Therefore, this method would provide a polynomial-time algorithm to solve \textsc{Binary Edge Cluster Containment}.
\end{proof}

The following theorem was shown by \cite{jin2009parsimony} for nonbinary networks. The advantage of the proof given below is that it shows that the problem is even NP-hard for binary networks, \cyan{demonstrates a direct and insightful relationship between cluster containment and parsimony},
and \cyan{leads directly to the conclusion} that the problem is not even fixed-parameter tractable (unless ~P~=~NP).

\begin{theorem}\label{thm:rootedhard}
Computing the softwired parsimony score of a binary character on a binary rooted phylogenetic network is NP-hard.
\end{theorem}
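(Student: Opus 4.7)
The plan is to reduce from \textsc{Binary Cluster Containment}, which was just shown to be NP-hard in Lemma~\ref{lem:clusterbinary}. Given an instance $(N^B, C)$ of \textsc{Binary Cluster Containment} on taxon set $X$ (with $C \subsetneq X$ nonempty), I would build a binary rooted phylogenetic network $N'$ and a binary character $\alpha$ such that $PS_\text{sw}(N',\alpha) = 1$ if and only if $C$ is displayed by some tree in $\cT(N^B)$. Since $C$ is a proper nonempty subset of $X$ the character $\alpha$ will be nonconstant, so the score is always at least $1$; the NP-hardness then reduces to deciding whether the score equals~$1$, which as a byproduct also rules out fixed-parameter tractability in the parsimony score (unless P=NP).

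The construction is as follows. Introduce a fresh taxon $x_0 \notin X$ and a new root $r'$, and let $N'$ be obtained from $N^B$ by adding $r'$ with two children: the old root of $N^B$, and the leaf $x_0$. Note that $N'$ remains binary and rooted. Define the character $\alpha$ on $X \cup \{x_0\}$ by setting $\alpha(x)=1$ for every $x \in C$ and $\alpha(x)=0$ for every $x \in (X\setminus C)\cup\{x_0\}$.

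For the easy direction, if some displayed tree $T\in\cT(N^B)$ has an edge $(u,v)$ whose descendant leaves are exactly $C$, then the corresponding displayed tree $T'$ of $N'$ (obtained by extending $T$ with the new root $r'$ and the pendant leaf $x_0$) also has $(u,v)$ with descendant set $C$. Label every node of $T'$ that is a descendant of $v$ by $1$, and every other node by $0$: this is an extension of $\alpha$ with exactly one changing edge, so $PS_\text{sw}(N',\alpha)\leq 1$. For the converse, suppose $PS_\text{sw}(N',\alpha)=1$, witnessed by a displayed tree $T'\in\cT(N')$ and an extension $\tau$ with a unique changing edge $e=(u,v)$. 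Removing $e$ partitions the leaves of $T'$ into the descendants of $v$ and the rest, and because $\tau$ is constant on each side (having only one change), these two sides must be exactly $C$ and $(X\setminus C)\cup\{x_0\}$ in some order. The main point is that the side below $v$ is a cluster of $T'$: but in $T'$ the taxon $x_0$ is a child of the root, so the only cluster of $T'$ containing $x_0$ is $\{x_0\}$ itself, and $(X\setminus C)\cup\{x_0\}\neq\{x_0\}$ because $X\setminus C$ is nonempty. Hence the cluster below $v$ must be $C$, and since $T'$ is $T$ extended by $r'$ and $x_0$ for some $T\in\cT(N^B)$, the cluster $C$ is displayed by $T$ and therefore by $N^B$.

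The main obstacle is exactly this last step: one must rule out the possibility that the ``single-change'' extension is obtained by placing the cut edge in the ``wrong'' direction, i.e.\ with the $0$-leaves on the descendant side and $C$ above. The pendant leaf $x_0$ attached directly to the new root is the gadget that forces orientation: any cluster containing $x_0$ in a tree displayed by $N'$ must be $\{x_0\}$ alone, so the complement $(X\setminus C)\cup\{x_0\}$ can never appear as a descendant set. Everything else — that $N'$ is binary, that displayed trees of $N'$ restrict naturally to displayed trees of $N^B$, and that $\alpha$ is binary and nonconstant — is immediate from the construction.
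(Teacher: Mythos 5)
Your reduction is exactly the one in the paper: you reduce from \textsc{Binary Cluster Containment}, attach a new root with a pendant taxon labelled $0$, use the indicator character of $C$, and argue that the score equals $1$ iff $C$ is represented, with the pendant leaf forcing the orientation of the single change. The only (cosmetic) difference is that you phrase the converse via the bipartition induced by the unique changing edge rather than via the states assigned to the two roots, so the proposal is correct and essentially identical to the paper's proof.
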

\begin{proof}
We reduce from \textsc{Binary Cluster Containment}. Let $N$ be a rooted binary phylogenetic network on taxon set $X$ and let $C\subset X$ be a cluster. Then, by definition of $\mathcal{C}(N)$, $C$ is in $\mathcal{C}(N)$ if and only if there is a tree $T$ on $X$ displayed by $N$ with an edge $e=(u,v)$ such that the taxa descending from $v$ in $T$ are precisely the elements of $C$. This is the case if and only if $v$ is the root of a subtree of $T$ with leaf set $C$. Now assume that there is an algorithm $\mathcal{A}$ to compute the softwired parsimony score of a binary character on a rooted binary phylogenetic network in polynomial time. Then, we can solve \textsc{Binary Cluster Containment} by the following algorithm $\mathcal{\tilde{A}}$: 
\begin{enumerate} 
\item Introduce a modified version $\hat{N}$ of $N$ as follows: Add an additional taxon $z$ to $N$ and a new node $\hat{\rho}$ as well as the edges \snew{$(\hat{\rho},z)$ and $(\hat{\rho}, \rho)$}, where $\rho$ is the root of $N$. Thus, the taxon set $\hat{X}$ of $\hat{N}$ is $X \cup \{z\}$ and the root of $\hat{N}$ is $\hat{\rho}$.
\item Construct a binary character $\alpha$ on $\hat{X}$ as follows: $$\alpha(x):=\begin{cases} 1 &\mbox{if }  x \in C\\ 0 & \mbox{if } x \in \hat{X}\setminus C. \end{cases}$$ Note that $\alpha(z)=0$ as $z \not\in X$ and thus $z \not\in C.$ 
\item Calculate the parsimony score $PS_\text{sw}(\hat{N},\alpha)$ using algorithm $\mathcal{A}$. 
\end{enumerate} 

Note that $PS_\text{sw}(\hat{N},\alpha)=1$ if and only if $N$ displays a tree $T$ which has a subtree with label set $C$. This is due to the fact that, as $\alpha(z)=0$, the softwired parsimony score of $\hat{N}$ can only be 1 if $\rho$ and $\hat{\rho}$ receive state 0. Otherwise, there would be a change required on one of the edges $(\hat{\rho},z)$ or $(\hat{\rho}, \rho)$ and additionally at least one more change in the part of $\hat{N}$ corresponding to $N$, as $X$ employs both states 1 and 0 for taxa in or not in $C$, respectively, because $C \subsetneqq X$. Moreover, if $\rho$ is in state 0, the softwired parsimony score of $\hat{N}$ is 1 precisely if $\hat{N}$ displays a tree $T$ which only requires one change, and that change has to be a change from 0 to 1. This is the case if and only if $N$ displays a tree $T$ with a subtree with leaf labels $C$. This case is illustrated by Figure~\ref{fig:extension}. Note that if $\mathcal{A}$ is polynomial, so is $\mathcal{\tilde{A}}$. Therefore, computing the softwired parsimony score of a binary character on a binary rooted phylogenetic network is NP-hard.
\end{proof}

We can extend the NP-hardness result to a more restricted class of rooted phylogenetic networks.

\begin{theorem}\label{softwired_treechild_thm}
Computing the softwired parsimony score of a binary tree-child time-consistent rooted phylogenetic network and a binary character is NP-hard.
\end{theorem}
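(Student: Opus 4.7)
The plan is to bootstrap Theorem~\ref{thm:rootedhard} by strengthening the hardness of \textsc{Binary Cluster Containment} to hold on tree-child time-consistent binary networks, and then to reuse the same reduction. The second half is essentially free: the construction used in Theorem~\ref{thm:rootedhard} to obtain $\hat{N}$ from $N$ (adding a new root $\hat{\rho}$ with children $\rho$ and a fresh leaf $z$) preserves both properties. Indeed, the only new internal node $\hat{\rho}$ has $z$ as a non-reticulation child, so if $N$ is tree-child then so is $\hat{N}$; and if $t$ is a time-stamp function for $N$, extending it by $t(\hat{\rho}) < t(\rho)$ and $t(z) > t(\hat{\rho})$ yields a valid time-stamp for $\hat{N}$. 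So, once the restricted hardness of \textsc{Binary Cluster Containment} is established, the argument of Theorem~\ref{thm:rootedhard} applies verbatim and gives the claimed NP-hardness.

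The task therefore reduces to verifying (or repairing) the chain Lemma~\ref{lem:edgecluster} $\to$ Lemma~\ref{lem:edgeclusterbinary} $\to$ Lemma~\ref{lem:clusterbinary}, on top of the known NP-hardness of \textsc{Cluster Containment}~\cite{clustercontainment,ISS2010b}, so that every network produced lies in the tree-child time-consistent binary class. The subdivision gadget of Lemma~\ref{lem:clusterbinary} (Figure~\ref{fig:edgesubdivision}) is already benign: the two new nodes $v_1,v_2$ each receive a fresh leaf child $h_1,h_2$, so tree-child is preserved, and the time-stamps of $v_1,v_2$ can be slotted strictly between those of $v$ and $u$. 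The binary refinement of Lemma~\ref{lem:edgeclusterbinary} can likewise be carried out in a tree-child-safe way, by arranging the small binary tree that replaces each high-degree node so that every internal refinement node has at least one non-reticulation descendant in the gadget, and by assigning all refinement nodes along a tree edge the same time-stamp.

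The main obstacle is the starting point: the NP-hardness reduction for \textsc{Cluster Containment} in~\cite{clustercontainment,ISS2010b} does not, a priori, produce tree-child time-consistent networks. My plan is to \emph{patch} each reticulation node produced by that reduction by subdividing one of its outgoing tree edges and attaching a new dummy leaf, so that every reticulation acquires a non-reticulation child. One must then verify that the dummy leaves do not alter the cluster being tested (this can be arranged by keeping them outside the cluster $C$) and that a layered time-stamping remains consistent after the subdivisions. If this patching interacts badly with the particular gadgets used in the original proof, the fallback is a fresh reduction from, e.g., 3-SAT, building the network in depth layers whose index doubles as the time-stamp and equipping every reticulation with an explicit leaf-child gadget in the style of Lemma~\ref{lem:clusterbinary}. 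In either case, the technical heart of the argument is the verification that the cluster-containment correspondence still holds after the tree-child and time-consistency fixes.
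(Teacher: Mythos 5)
There is a genuine gap, and it sits exactly where you place ``the main obstacle.'' Your route requires \textsc{(Binary) Cluster Containment} to be NP-hard on tree-child (time-consistent) networks, and this cannot be achieved unless P$=$NP: \textsc{Cluster Containment} is polynomial-time solvable on tree-child networks -- this is one of the results of the very reference \cite{ISS2010b} that supplies the NP-hardness for (merely) time-consistent networks. Consequently, any ``patch'' that gives every reticulation a non-reticulation child must break the cluster-containment correspondence, and yours does: hanging a dummy leaf below a reticulation $r$ inserts that leaf into the cluster of every edge above $r$ in every displayed tree, so an edge that represented $C$ in the original network represents $C$ plus dummies afterwards, and asking whether some edge of the patched network represents exactly $C$ is a different (and, on tree-child networks, tractable) question. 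Keeping the dummies outside $C$ does not repair this. The fallback you mention (a fresh reduction from 3-SAT built in time-stamped layers) is not carried out, so it cannot rescue the argument; as written, the proposal establishes only the easy half (that the $\hat{N}$ construction of Theorem~\ref{thm:rootedhard} preserves the tree-child and time-consistency properties of its input, which was never the bottleneck).

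The paper avoids cluster containment entirely for this theorem. It reduces from the \emph{unrestricted} softwired parsimony problem of Theorem~\ref{thm:rootedhard}: given any binary network and binary character, subdivide each reticulation edge and hang a cherry off the subdivision node, labelling the two cherry leaves $0$ and $1$. Every node that formerly had a reticulation child now has a subdivision node (a tree node) as a child, so the network becomes tree-child; it stays binary and can be time-stamped. Every cherry lies on tree edges, hence appears in every displayed tree and forces exactly one change there, while the subdivision node is free to copy whatever state is optimal for the rest of the network; so the softwired parsimony score increases by exactly the number of added cherries and the hardness transfers. The moral is that one should control the \emph{parsimony score} of the transformation directly rather than trying to preserve the cluster structure, which provably cannot survive a tree-child-ification.
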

\begin{proof}
We can make any network tree-child and time-consistent by hanging a cherry \cs{in the middle of} each reticulation edge. If we give the two leaves of each cherry character states~0 and~1, then the softwired parsimony score is increased exactly by the number of added cherries.
\end{proof}

\begin{corollary}\label{cor:notfpt}
It is NP-hard to decide if the softwired parsimony score of a binary rooted time-consistent phylogenetic network and a binary character is equal to one. In particular, there is no fixed-parameter tractable algorithm with the parsimony score as parameter unless~P~=~NP.
\end{corollary}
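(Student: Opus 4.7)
The plan is to combine the reduction underlying Theorem~\ref{thm:rootedhard} with the cherry-hanging construction of Theorem~\ref{softwired_treechild_thm}, but choosing cherry labels that do not inflate the parsimony score.

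First, I would re-use the reduction from \textsc{Binary Cluster Containment} appearing in the proof of Theorem~\ref{thm:rootedhard}: it produces, from an instance $(N,C)$, a binary rooted phylogenetic network $\hat{N}$ and binary character $\alpha$ such that $PS_\text{sw}(\hat{N},\alpha)=1$ iff $C\in\mathcal{C}(N)$ (and $PS_\text{sw}(\hat{N},\alpha)\geq 2$ otherwise). This already gives NP-hardness of deciding ``$PS_\text{sw}=1$'' on binary rooted networks; the only ingredient missing is time-consistency of $\hat{N}$.

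To repair this I would apply the cherry-hanging construction of Theorem~\ref{softwired_treechild_thm}, which subdivides every reticulation edge with a new node $w$ and attaches a fresh two-leaf cherry as a second child of $w$. The topological effect---relaxing the forced equality $t(\mathrm{parent})=t(\mathrm{reticulation})$ to a strict inequality $t(\mathrm{parent})<t(w)=t(\mathrm{reticulation})$ in the time-stamp function---suffices to yield time-consistency, and depends only on the topology, not on the labels chosen for the new leaves. The key deviation from the original cherry trick is that I would label \emph{both} leaves of each added cherry with state $0$ rather than mixing $0$ and $1$; then assigning state $0$ to the new cherry nodes and to each subdivision node $w$ costs nothing, so the softwired parsimony score is unchanged. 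The resulting instance $(\hat{N}',\alpha')$ is binary, rooted and time-consistent, and still satisfies $PS_\text{sw}(\hat{N}',\alpha')=1$ iff $C\in\mathcal{C}(N)$.

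For the FPT consequence I would argue as follows: any FPT algorithm for computing $PS_\text{sw}$ parameterised by the score would, when run on $(\hat{N}',\alpha',k)$ with threshold $k=1$, decide $PS_\text{sw}\leq 1$ in time $f(1)\cdot\mathrm{poly}(|\hat{N}'|)$; since $\alpha'$ is non-constant we have $PS_\text{sw}(\hat{N}',\alpha')\geq 1$, so this would solve the NP-hard problem above in polynomial time, contradicting $P\neq NP$. The main point to verify carefully is the invariance of the parsimony score under the same-state cherry modification: any optimal extension of $\alpha$ on $\hat{N}$ extends to $\hat{N}'$ by colouring every newly added node $0$ without creating a single extra change, while no strictly smaller score is achievable because $\alpha'$ still takes both values on the leaf set.
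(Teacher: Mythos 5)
Your reduction from \textsc{Binary Cluster Containment} and your concluding FPT argument are fine, but the score-preservation claim for the same-state cherry gadget is where the proof breaks. Consider a reticulation edge $(u,v)$ that you subdivide at $w$ and decorate with a cherry whose two leaves both get state $0$. In the yes-case of the Theorem~\ref{thm:rootedhard} reduction, the optimal extension of $\hat{N}$ labels the entire displayed subtree with leaf set $C$ by state $1$; any reticulation edge lying inside that subtree therefore has $\tau(u)=1$. Colouring $w$ and the cherry nodes $0$, as you propose, then creates a change on the tree edge $(u,w)$ (and, if that edge is the switched-on one, another on $(w,v)$); colouring $w$ with $1$ instead forces at least one change between $w$ and its all-$0$ cherry, whose edges are tree edges and hence appear in every switching. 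So every reticulation edge whose tail must be labelled $1$ contributes at least one unavoidable extra change, and $PS_\text{sw}(\hat{N}',\alpha')$ becomes $1$ plus a quantity depending on how many reticulations the subtree displaying $C$ passes through --- which is not $1$ in general, since nothing in the \textsc{Cluster Containment} instance guarantees that $C$ is displayed by a reticulation-free subtree. This is exactly why the $(0,1)$-cherry trick of Theorem~\ref{softwired_treechild_thm} only proves hardness of \emph{computing} the score (it shifts the optimum by a known additive constant) and cannot be adapted to preserve the value $1$: any pendant gadget hung on a reticulation edge interacts with the labelling of that edge's endpoints.

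The paper sidesteps the issue by not modifying the network at all: it invokes the result of \cite{ISS2010b} that \textsc{Cluster Containment} is already NP-hard on time-consistent networks via a binarity-preserving transformation, and then observes that the $\hat{N}$ construction of Theorem~\ref{thm:rootedhard} (adding only a new root $\hat{\rho}$ with a pendant leaf $z$) preserves both binarity and time-consistency. To repair your argument you would need either that route, or a time-consistency gadget whose extra cost is provably zero under \emph{every} relevant extension; a pendant subtree with a fixed character state attached to a reticulation edge cannot provide this.
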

\begin{proof}
\cs{It has been proven by \cite{ISS2010b} that \textsc{Cluster Containment} is NP-hard even for time-consistent networks by reducing from \textsc{Cluster Containment} on general networks. Since the proof by  \cite{ISS2010b} transforms the input network in a way that preserves binarity, the corollary follows directly from the combination of the result by \cite{ISS2010b} and Theorem~\ref{thm:rootedhard}.}
\end{proof}

\begin{figure}[H]
    \centering
    \includegraphics[scale=0.7]{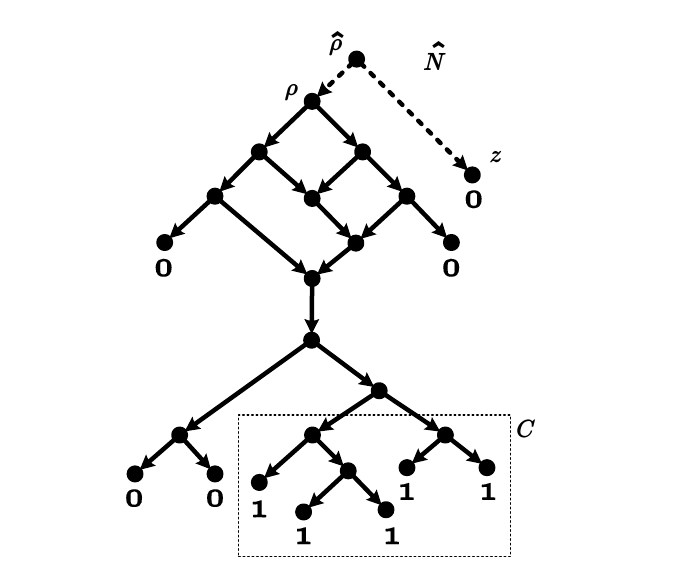} 
    \caption{Illustration of the extension of a rooted binary phylogenetic network $N$ (solid lines) to the rooted binary phylogenetic network $\hat{N}$ as described in the proof of Theorem~\ref{thm:rootedhard}. The additional taxon $z$ is assigned state 0 along with all taxa in $X\setminus C$, whereas all taxa in $C$ are assigned state 1. Then, the softwired parsimony score of $\hat{N}$ is 1 if and only if $N$ displays a tree $T$ with a pending subtree with leaf set $C$.
\label{fig:extension}}
\end{figure}

\subsection{Complexity of Approximating the Softwired Parsimony Score}\label{subsec:approx}

\leo{In order to proceed to the question of approximability, we require a new definition and a lemma.} Recall that
$\cS(N)$ is the set of all switchings of a network. Given a rooted phylogenetic network $N$, we define $PS_{\cS}(N,\alpha)$ as:
\[
\min_{S\in\cS(N)} \min_{\tau} \sum_{e\in E(S)} c_\tau(e)
\]
where the second minimum is taken over all extensions~$\tau$ of~$\alpha$ to~$V(S)$.

The following \leo{lemma states} that optimal solutions can
equivalently be modelled as selecting the lowest-score switching, ranging over all
extensions $\tau$ of a character $\alpha$ to the nodes of the \emph{network}. This is the characterisation of optimality used in Section~\ref{ilp_sec} and enables us to circumvent some of the \snew{suppression and deletion} technicalities associated with the concept ``display''. \leo{Since the lemma is intuitively clear, we defer its proof to the appendix.}

\begin{lemma}
\label{lem:equivalenceDef}
Consider a rooted phylogenetic network~$N$ on~$X$ and a $p$-state character~$\alpha$ on~$X$. Then 
\[
PS_{\cS}(N,\alpha) = PS_\text{sw}(N,\alpha).
\]
\end{lemma}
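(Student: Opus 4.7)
My plan is to prove the two inequalities $PS_{\cS}(N,\alpha) \leq PS_\text{sw}(N,\alpha)$ and $PS_{\cS}(N,\alpha) \geq PS_\text{sw}(N,\alpha)$ separately, exploiting the characterisation, already stated in the preliminaries, that a tree~$T$ is displayed by~$N$ precisely when it is obtained from some switching $S \in \cS(N)$ by (i) iteratively deleting indegree-0 outdegree-1 nodes, (ii) deleting unlabelled outdegree-0 nodes, and (iii) suppressing indegree-1 outdegree-1 nodes. This gives a natural, surjective correspondence $S \mapsto T(S)$ from switchings to displayed trees, and the whole argument reduces to showing that the optimal extension cost on~$S$ and on $T(S)$ agree.

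For the direction $PS_\text{sw}(N,\alpha) \leq PS_{\cS}(N,\alpha)$, I would take an optimal pair $(S,\tau)$ achieving $PS_{\cS}(N,\alpha)$, let $T = T(S)$, and consider the restriction $\tau' = \tau|_{V(T)}$, which is still an extension of~$\alpha$ because $X \subseteq V(T)$. Every edge $(u,w)$ of~$T$ either already belongs to~$S$ (same contribution to the sum) or arises from suppressing a path $u \to v_1 \to \cdots \to v_k \to w$ in~$S$, and in the latter case $c_{\tau'}(u,w) = 1$ iff $\tau(u)\neq\tau(w)$, which by a triangle-inequality-style argument is at most the total change along the path. The remaining edges of $S$ (those in deleted dangling substructures) contribute only non-negatively. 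Hence $\sum_{e \in E(T)} c_{\tau'}(e) \leq \sum_{e \in E(S)} c_\tau(e)$, and the inequality follows.

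For the reverse direction, I would take an optimal pair $(T,\tau_T)$ realising $PS_\text{sw}(N,\alpha)$, pick any switching $S$ with $T(S) = T$, and extend $\tau_T$ to $\tau$ on $V(S) = V(N)$ without increasing the score. Concretely: for each maximal suppressed path $u \to v_1 \to \cdots \to v_k \to w$ in~$S$, set $\tau(v_i) := \tau_T(u)$ for all $i$, which contributes exactly $c_{\tau_T}(u,w)$ over the $k+1$ path edges; and for nodes in deleted indegree-0 or outdegree-0 pendant substructures, propagate any single state outward from the point where the substructure attaches, giving contribution $0$. This yields $\sum_{e \in E(S)} c_\tau(e) = \sum_{e \in E(T)} c_{\tau_T}(e)$, proving the other inequality.

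The main obstacle, and the only thing that needs real care, is the bookkeeping around the three cleanup operations: I must argue that the iterative deletion of indegree-0 outdegree-1 nodes interacts correctly with the switching structure (in particular, that such ``dangling'' subgraphs are always internally connected in a way that lets one propagate a single state through them at zero cost), and that suppression of degree-2 chains never loses or gains cost at the optimum. Once these local checks are made, both inequalities follow immediately by combining the above constructions, giving the claimed equality.
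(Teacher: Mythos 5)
Your proposal is correct and follows essentially the same route as the paper's proof: one direction by restricting an optimal extension from a switching~$S$ to the displayed tree $T(S)$ and noting that suppression can only decrease the change count (your triangle-inequality remark makes explicit what the paper asserts), and the other by propagating states along suppressed paths and into the deleted dangling parts at zero extra cost, exactly as the paper does with its preorder traversal. No substantive difference or gap.
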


The following straightforward corollary will be useful when describing approximation-preserving reductions.

\begin{corollary}
\label{cor:tree2labelling}
Given a network $N$ and a character $\alpha$ on $X$, a tree $T \in \cT(N)$, a switching $S \in \cS(N)$ corresponding to $T$ and an extension $\tau$ of $\alpha$ to $V(T)$, we can construct in polynomial time an extension $\tau'$ of $\alpha$ to $V(N)$ such that $\sum_{e \in E(S)} c_{\tau'}(e)$ $\snew{=} \sum_{e \in E(T)} c_{\tau}(e)$.
\end{corollary}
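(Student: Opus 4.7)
The plan is to extract and re-package the construction that already appears in the second half of the proof of Lemma~\ref{lem:equivalenceDef}, which implicitly produces an extension $\tau'$ from $\tau$, $T$ and $S$; all that remains is to observe that this construction runs in polynomial time and that the bound stated there (written as ``$\geq$'' only because it is chained into $PS_{\cS}(N,\alpha)$) is in fact an equality for the specific pair $(S, \tau')$ produced.

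Concretely, I would proceed as follows. First, note that $V(S) = V(N)$, since a switching only deletes edges, so building $\tau'$ on $V(S)$ is the same as building it on $V(N)$. Initialise $\tau'$ by setting $\tau'(u) = \tau(u)$ for every node $u$ of $S$ that still corresponds to a node of $T$ after the standard cleanup (deleting indegree-0 outdegree-1 nodes, deleting unlabelled outdegree-0 nodes, and suppressing indegree-1 outdegree-1 nodes), and mark all other nodes of $S$ as undefined. Next, locate the node $\rho$ of $S$ corresponding to the root of $T$ (as characterised in the proof of Lemma~\ref{lem:equivalenceDef}) and propagate $\tau(\text{root}(T))$ to $\rho$ and to every undefined ancestor of $\rho$. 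Finally, walk through $S$ in preorder and, for each still-undefined node $u$ with parent $u_p$, set $\tau'(u) = \tau'(u_p)$; this is well-defined because preorder visits parents before children.

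The verification is the same computation as in Lemma~\ref{lem:equivalenceDef}: any edge of $S$ outside the image of $T$ joins nodes that carry equal states under $\tau'$ and therefore contributes $0$, while for each edge $e = (u,v)$ of $T$ the preimage set $E_S(e)$ forms a directed path in $S$ whose internal edges join nodes of equal state (again by propagation) and whose unique edge incident to the image of $v$ has cost exactly $c_\tau(e)$. Summing over $e \in E(T)$ yields $\sum_{e \in E(S)} c_{\tau'}(e) = \sum_{e \in E(T)} c_\tau(e)$, which is the equality required. Each step of the construction (identifying surviving nodes, locating $\rho$, upward propagation, preorder traversal) is linear in $|V(N)| + |E(N)|$, so the whole procedure runs in polynomial time; the only minor obstacle is the bookkeeping needed to track the node correspondences created when indegree-1 outdegree-1 nodes are suppressed, but this can be recorded during the $S \to T$ reduction itself and is therefore essentially cost-free.
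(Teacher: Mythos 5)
Your proposal is correct and follows essentially the same route as the paper: the corollary is obtained by extracting the $\tau'$ construction from the second half of the proof of Lemma~\ref{lem:equivalenceDef} (copying $\tau$ onto surviving nodes, then propagating states downward in preorder), observing that the chain of inequalities there is in fact an equality for the specific pair $(S,\tau')$, and noting that the whole procedure is a constant number of linear-time traversals.
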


We now show that it is very hard to approximate the softwired parsimony score on rooted networks. \sbsteven{We give two inapproximability results. The first, the stronger of the two, applies to nonbinary networks and
holds even when the network is both tree-child and time-consistent. It shows that in a complexity-theoretic sense trivial
approximation algorithms are the best one can hope for in this case. The second result, which is only slightly weaker, applies to binary networks.} Both results are much stronger than the APX-hardness result presented by \cite{jin2009parsimony}.
At the present time we do not have an inapproximability result for networks that are simultaneously binary, tree-child (and time-consistent): in this sense
Theorem \ref{softwired_treechild_thm} is \snew{currently the strongest hardness result we have} for such networks.

Before proceeding we formally define the output of an algorithm that
approximates $PS_\text{sw}(N,\alpha)$ as a tree $T \in \cT(N)$ and a certificate
that $T \in \cT(N)$ i.e. a switching $S \in \cS(N)$ corresponding to $T$. The certificate
is necessary because it is NP-hard to determine whether a tree is displayed by a network
\cite{clustercontainment}. The parsimony score (i.e. value of the objective function) associated with \snew{the output} $T$ is then
\[
PS(T,\alpha) = \min_{\tau} \sum_{e\in E(T)} c_\tau(e)
\]
where the minimum is taken over all extensions~$\tau$ of~$\alpha$ to~$V(T)$. Note
that $PS(T,\alpha)$ can easily be computed in polynomial time by applying Fitch's algorithm to $T$. If necessary Corollary~\ref{cor:tree2labelling} can then be applied to transform this in polynomial time into an extension of $\alpha$ to $V(N)$ \cs{such that the} switching corresponding to $T$, i.e. our certificate, has parsimony score
at most $PS(T,\alpha)$.

\sbsteven{Consider the following simple observation}.

\begin{observation}
\label{obs:trivial}
The softwired parsimony score of a rooted phylogenetic network $N$ on $X$ and a $p$-state character $\alpha$ on  $X$   can be
(trivially) approximated in polynomial time with approximation factor $|X|$, for any~$p\geq 2$.
\end{observation}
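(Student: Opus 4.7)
The plan is to exhibit a trivial polynomial-time algorithm that outputs any single tree $T\in\cT(N)$ (together with a switching $S\in\cS(N)$ as certificate), compute $PS(T,\alpha)$ by Fitch's algorithm, and then show by a crude counting bound that this score is within a factor of $|X|$ of the optimum. Formally, if needed, Corollary~\ref{cor:tree2labelling} can be invoked to lift the Fitch labelling of $T$ back to an extension of $\alpha$ to $V(N)$ whose associated switching has score at most $PS(T,\alpha)$.

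First I would dispose of the degenerate case: if $\alpha$ is constant on $X$, then $PS_\text{sw}(N,\alpha)=0$ and any $T\in\cT(N)$ trivially attains this score. Otherwise, $\alpha$ uses at least two distinct states, so $PS_\text{sw}(N,\alpha)\geq 1$ since any tree containing two leaves of distinct states must pay for at least one change on the unique path between them.

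Next I would specify the algorithm. Produce a switching $S\in\cS(N)$ by selecting, for each reticulation of $N$, an arbitrary incoming edge to retain (and deleting the others); this clearly runs in polynomial time. Derive $T\in\cT(N)$ from $S$ in the standard way, by deleting indegree-0 outdegree-1 nodes and unlabelled outdegree-0 nodes and suppressing indegree-1 outdegree-1 nodes, and then compute $PS(T,\alpha)$ via Fitch's algorithm.

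The only substantive step, and the crux of the bound, is showing $PS(T,\alpha)\leq |X|$. Fix any state $s\in\{1,\ldots,p\}$ and consider the extension $\tau$ of $\alpha$ that assigns $\tau(v)=s$ for every internal node $v$ of $T$. Then $c_\tau(e)=0$ for every internal edge of $T$, and $c_\tau(e)=1$ only on those pendant edges whose leaf has state different from $s$. Since $T$ has exactly $|X|$ leaves, at most $|X|$ edges contribute a change, so $PS(T,\alpha)\leq\sum_{e\in E(T)}c_\tau(e)\leq |X|$. Combining this upper bound with $PS_\text{sw}(N,\alpha)\geq 1$ yields the approximation factor $|X|$. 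There is no real obstacle here; the only detail to check is that a switching and the corresponding tree are computable in polynomial time, which is immediate from the definitions.
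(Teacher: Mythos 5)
Your proof is correct and follows essentially the same route as the paper: pick an arbitrary tree displayed by $N$, label all internal nodes with a single state so that at most $|X|$ pendant edges change, and combine with $PS_\text{sw}(N,\alpha)\geq 1$ in the non-constant case. The only cosmetic difference is that the paper chooses $s$ to be a most frequent state, giving the marginally sharper bound $(1-1/p)|X|<|X|$, which does not change the claimed factor.
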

\begin{proof}
\steven{
Let $s \in \{1,\ldots ,p\}$ be the state to which at least a fraction $1/p$ of $X$ is mapped by $\alpha$. Let~$T$ be an arbitrary tree in $\cT(N)$. We extend $\alpha$ to $V(T)$ by labelling
all internal nodes of $T$ with~$s$. Clearly, $PS_{sw}(N,\alpha)=0$ if and only if $\alpha$ maps all elements in $X$ to the same character state, in which case the extension of $\alpha$ to $V(T)$ also yields a parsimony score of~0. Otherwise, $PS_{sw}(N,\alpha) \geq 1$ and the extension described yields a parsimony score of at most $(1-1/p)|X|<|X|$, from which the result follows.}
\end{proof}

The following theorem shows that, in an asymptotic sense, Observation \ref{obs:trivial} is actually
the best result possible, even when the topology of the network is quite heavily restricted.

\begin{theorem}
\label{thm:sat}
For every constant $\epsilon > 0$ there is no polynomial-time approximation algorithm that can
approximate $PS_{sw}(N,\alpha)$ to a factor $|X|^{1-\epsilon}$, where $N$ is a \steven{tree-child, time-consistent} network and $\alpha$ is a binary
character on $X$, unless~P~=~NP.
\end{theorem}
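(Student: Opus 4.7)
The plan is to amplify the $1$-vs-$\geq 2$ gap for softwired parsimony already established in Theorem~\ref{thm:rootedhard} and its corollary into a full $|X|^{1-\epsilon}$ gap by reducing from a gap version of 3-SAT, which is NP-hard by the PCP theorem: there exists a constant $\delta > 0$ such that distinguishing satisfiable 3-CNF formulas from formulas where every truth assignment satisfies at most a $(1-\delta)$-fraction of the clauses is NP-hard.

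Given such a 3-SAT instance $\phi$ with $n$ variables and $m$ clauses, I would build a rooted phylogenetic network $N_\phi$ on a taxon set $X$ with binary character $\alpha_\phi$ whose softwired parsimony encodes the maximum fraction of clauses simultaneously satisfiable. The building blocks would be: a \emph{variable gadget} for each $x_i$, consisting of a reticulation whose two in-edges encode $x_i=\text{true}$ or $x_i=\text{false}$ (so that a switching of $N_\phi$ selects a truth assignment); and a \emph{clause gadget} for each $C_j$, modelled on the cluster-containment construction of Theorem~\ref{thm:rootedhard}, arranged so that some tree in $\cT(N_\phi)$ displays the clause gadget at cost $0$ iff the chosen switching selects a literal satisfying $C_j$, and otherwise contributes at least one additive change. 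With $|X|=\mathrm{poly}(n,m)$ this yields $PS_\text{sw}(N_\phi,\alpha_\phi)=O(1)$ in the YES case and $PS_\text{sw}(N_\phi,\alpha_\phi)\geq \delta m$ in the NO case. To push the separation to $|X|^{1-\epsilon}$, I would run the construction in parallel with a suitable polynomial inflation and pad with neutral taxa all of state $0$, choosing the polynomial so that the YES/NO ratio exceeds $|X|^{1-\epsilon}$.

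Finally, to ensure the network is tree-child and time-consistent, I would apply the cherry-hanging construction of Theorem~\ref{softwired_treechild_thm}, which adds a controlled polynomial number of changes on both sides of the gap and hence does not affect the asymptotic ratio. The hardest step is engineering the clause gadget so that (i) switchings at the variable reticulations are read \emph{consistently} by every clause referencing the same variable, so that local switching choices really do correspond to a single global truth assignment, and (ii) a violated clause forces at least one change that cannot be cancelled by relabelling unrelated internal nodes of the network (so that $PS_\text{sw}$ faithfully counts unsatisfied clauses rather than a smaller quantity). Harmonising this consistency-enforcement with the rigid tree-child and time-consistent structure, where cherry-hanging must not create cheap new switchings that collapse the gap, is expected to be the technical heart of the proof.
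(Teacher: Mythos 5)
Your high-level plan --- variable gadgets, clause gadgets, polynomial inflation to force a large multiplicative gap, and a final repair step for the tree-child and time-consistency conditions --- has the same shape as the paper's argument, but two of your steps fail as described. First, the repair step: hanging a cherry with leaf states $0$ and $1$ in the middle of each reticulation edge (the device of Theorem~\ref{softwired_treechild_thm}) adds exactly one unavoidable change per cherry to \emph{every} solution. Any SAT encoding with inflation factor $f$ has $\Omega(mf)$ reticulation edges in the clause gadgets alone, so this adds $\Omega(mf)$ to the YES-case cost as well as to the NO-case cost, and the YES/NO ratio collapses to $O(1)$; the danger is not, as you suggest, that the cherries create cheap new switchings, but that each one costs a mutation additively. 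The paper instead attaches a \emph{single} pendant leaf to each offending node, with its character state chosen to match the state that node takes in good solutions wherever possible; only the $2n$ leaves hung below the connector nodes can force extra mutations, and these $O(n)$ extra changes are absorbed by choosing $f(n,\epsilon)$ large enough. Second, your consistency mechanism is not actually specified: a variable gadget consisting of one reticulation whose two in-edges ``encode'' true/false cannot be read by several clause gadgets, because the truth value lives in a switching choice local to that one reticulation. The paper resolves this by encoding the truth value in the \emph{character state} of a pair of connector nodes $z,\neg z$, and attaching $2f$ taxa (half labelled $0$, half labelled $1$, each a reticulation child of both $z$ and $\neg z$) so that any solution of cost below $f$ must label $z$ and $\neg z$ with complementary states; each clause gadget then simply hangs its $f$ taxa, all labelled $1$, below the connector nodes of its literals.

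Two smaller points. Padding with neutral taxa of state $0$ only increases $|X|$ while leaving the gap unchanged, so it makes the target ratio $|X|^{1-\epsilon}$ harder to beat, not easier; the amplification must come from replicating the gadget taxa $f(n,\epsilon)=n^{g(\epsilon)}$ times so that every violated constraint costs at least $f(n,\epsilon)$ while the satisfiable case costs only $O(n)$. And the PCP/MAX-3SAT machinery is unnecessary overhead: once the gadgets are inflated, plain 3-SAT already yields a gap of roughly $1$ (or $n+1$ after the tree-child repair) versus $f(n,\epsilon)$, which is all that is needed.
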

\begin{proof}
We reduce from the NP-hard decision problem 3-SAT. This is the problem of determining whether a boolean formula
in CNF form, where each clause contains at most 3 literals, is satisfiable. Let $B = (V,C)$ be an instance of 3-SAT,
where $V$ is the set of variables and $C$ is the set of clauses. Let $|V|=n$. Observe that $|C|=m$ is at most
$O(n^3)$ because in a decision problem it makes no sense to include repeated clauses.

For each constant $\epsilon > 0$, we will show how to construct a parsimony instance $(N,\alpha)$ such that the existence of a polynomial-time $|X|^{1-\epsilon}$ approximation would allow us to determine in polynomial time whether $B$ is
a YES or a NO instance, from which the theorem will follow. \cyan{The construction can be thought of as an ``inapproximability'' variant of the hardness construction used by \cite{clustercontainment}}.

Throughout the proof we will make heavy use of the equivalence described in
Lemma~\ref{lem:equivalenceDef}. Specifically, we will characterise optimal solutions
to the softwired parsimony problem as the score yielded by the lowest-score switching, ranging over all extensions of $\alpha$ to $V(N)$.

We begin by proving the result for networks that are time-consistent, but not tree-child. Later we will show how to extend the result to networks that are time-consistent and
tree-child.

The centrepiece of the construction is the following \emph{variable gadget}. Let $z$ be a variable in $V$. We
introduce two nodes which we refer to as $z$ and $\neg z$, and name them collectively \emph{connector} nodes. We introduce two sets of taxa, $X_{z,0}$ and
$X_{z,1}$, each containing  $f(n,\epsilon)$ taxa, where $f(n,\epsilon)$ is a function that we will specify later.
For each taxon $x \in X_{z,i}$ we set $\alpha(x)=i$. By introducing $2 \cdot f(n,\epsilon)$ reticulation nodes
we connect  each taxon in $X_{z,0}$ and $X_{z,1}$ to both $z$ and $\neg z$ (see Figure \ref{fig:sat}). Observe that if
both $z$ and $\neg z$ are labelled with the same character state, the parsimony score of this gadget (and thus
of the network as a whole) will be at least $f(n,\epsilon)$. On the other hand, if $z$ and $\neg z$ are labelled
with different character states, the gadget contributes (locally) zero to the parsimony score. The idea is thus
that we label $(z, \neg z)$ with $(1,0)$ if we wish to set variable $z$ to be TRUE, and $(0,1)$ if we wish $z$
to be FALSE i.e. $\neg z$ is TRUE. By choosing $f(n,\epsilon)$ to be very large we will ensure that $z$ and $\neg z$
are never labelled with the same character state \snew{in ``good'' solutions}.

We construct one variable gadget for each $z \in V$. Next we add the root $\rho$ and two nodes $s_0$ and $s_1$. We
connect $\rho$ to $s_0$ and to $s_1$. Next, we connect $s_0$ (respectively, $s_1$) to every connector node (ranging over all
variable gadgets).  Hence, every connector node has indegree 2. The idea is that (without loss of generality) $s_0$ (respectively, $s_1)$ can be
assumed to be labelled 0 (respectively, 1). Therefore, if a connector node is labelled with state 0 (respectively, 1), 
it will choose $s_0$ (respectively, $s_1)$ to be its parent, and these edges will not contribute any mutations to the parsimony score. There are two
points to note here. Firstly, there
will be exactly one mutation incurred on the two edges $(\rho, s_0)$ and $(\rho, s_1)$, and this has an important role in the ensuing inapproximability
argument; we shall return to this later. Secondly, it could happen that the labelling of $s_0$ and $s_1$ is $(1,0)$ rather than $(0,1)$, but \snew{in that case the analysis is entirely symmetrical}.

It remains only to describe the \emph{clause gadgets}. These are very simple. For each clause $c \in C$ we introduce a size $f(n,\epsilon)$ set of taxa that
we call $X_{c}$. For each taxon $x \in X_c$ we set $\alpha(x)=1$. By introducing $f(n,\epsilon)$ nodes - \snew{these will be reticulations, unless the clause contains only one literal} - we connect each taxon in $X_{c}$ to the connector nodes in the variable
gadgets corresponding to the literals in the clause. For example, if $c$ is the clause $(\neg x \vee y \vee z)$, each node in $X_c$ has
$\neg x$, $y$ and $z$ as its parents. Similarly to the variable gadgets, observe that if none of the literals corresponding to $c$ are labelled 1 (i.e. set
to TRUE), the clause gadget corresponding to $c$ will raise the parsimony score by at least $f(n,\epsilon)$, but if that at least one literal is TRUE,
the (local) parsimony cost will be zero.

% this won't left align properly :(

\begin{figure}
\centering
\includegraphics[scale=.6]{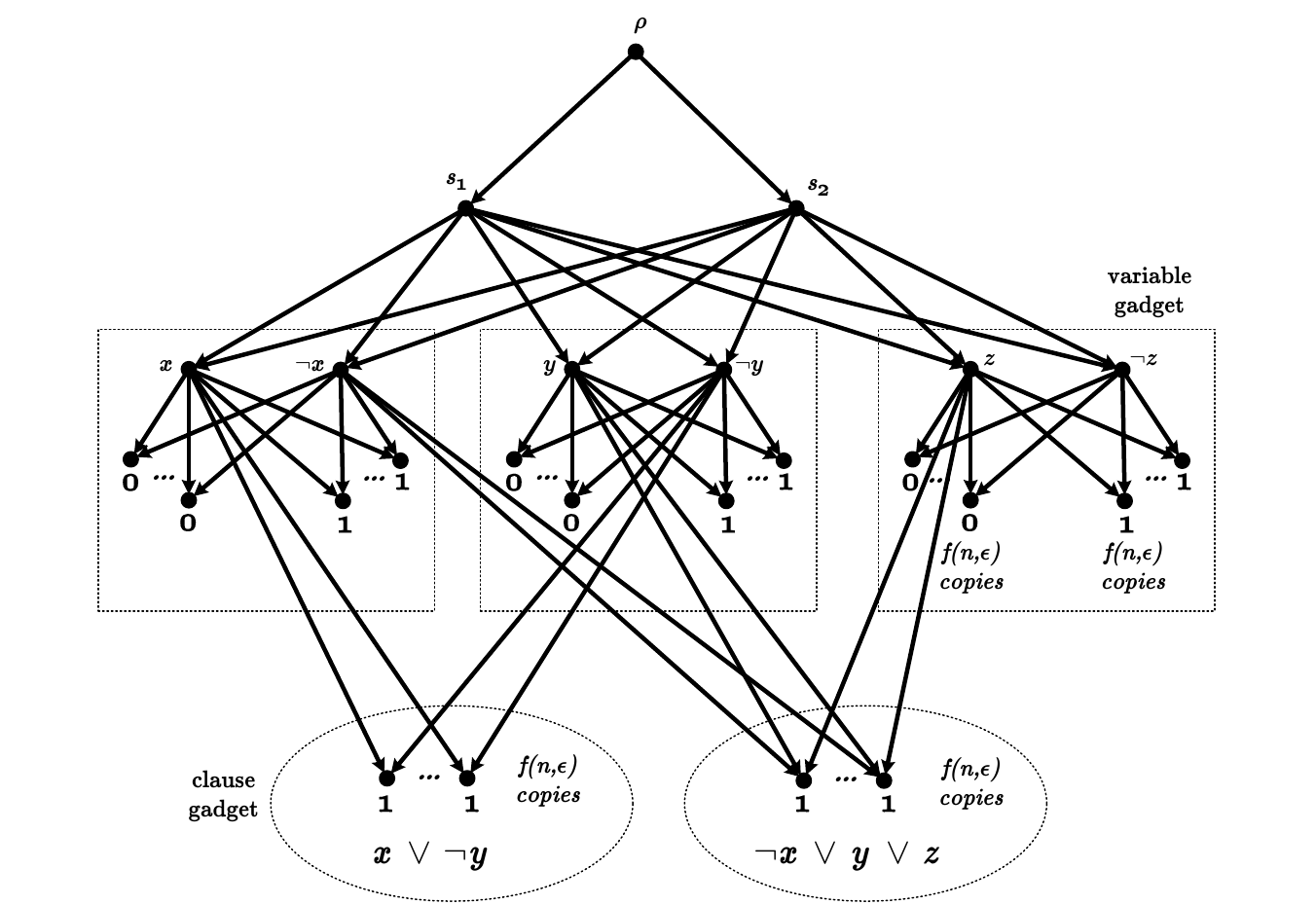}
\caption{An encoding of the 3-SAT instance $(x \vee \neg y) \wedge (\neg x \vee y \vee z)$
as described in Theorem \ref{thm:sat}. Note that the network is time-consistent -- a possible time-stamp allocates 1 to the root, 2 to all reticulation nodes plus $s_0$ and $s_1$, and 3 to the nodes in any single-literal
clause gadgets, as these are tree nodes -- 
but not tree-child: a slight modification is required to make it tree-child.
\label{fig:sat}}
\end{figure}

Observe firstly that $PS_{sw}(N,\alpha) \geq 1$ because both character states appear in the range of $\alpha$. More fundamentally, $PN_{sw}(N, \alpha)=1$ if
$B$ is satisfiable - in which case the single mutation occurs on one of the edges $(\rho, s_0)$ and $(\rho, s_1)$ - and $PN_{sw}(N, \alpha) \geq f(n, \epsilon)$ if $B$ is unsatisfiable. This dichotomy holds
because, in order to have $PS_{sw}(N, \alpha)$ $< f(n, \epsilon)$, it is necessary that the connector nodes in the variable gadgets always have \steven{a labelling of} the form $(0,1)$ or $(1,0)$, and
that for every clause $c$ the nodes in $X_c$ all have at least one TRUE parent i.e. $B$ is satisfiable.

The high-level idea is to choose $f(n,\epsilon)$ to be so large that even a weak approximation factor will be sufficient to determine without error whether $B$ is satisfiable or unsatisfiable. 
\steven{Before explaining how to choose $f(n,\epsilon)$ we formally describe the \snew{steps in the} reduction. Let
$T \in \cT(N)$ be the tree produced by the approximation algorithm for $PS_{sw}(N, \alpha)$,
and $S \in \cS(N)$ a corresponding switching. We compute $PS(T,\alpha)$ and let $\tau$ be any extension of $\alpha$ to $V(T)$ that achieves parsimony
score $PS(T,\alpha)$; this can all be done in polynomial time using Fitch's algorithm. If $PS(T,\alpha) \geq f(n,\epsilon)$ we declare that the SAT instance $B$ is unsatisfiable. Otherwise, we declare that $B$ is satisfiable. Note that, by Corollary \ref{cor:tree2labelling}, we can use $T, S$ and $\tau$ to obtain in polynomial time an
extension $\tau'$ of $\alpha$ to $V(N)$ such that the parsimony score of $S$ under $\tau'$
is also strictly less than $f(n,\epsilon)$. Therefore, for each variable $z$ in the SAT
instance, $\tau'$ has to label $z$ and $\neg z$ with different character states, and
for each clause $c \in C$ in the SAT instance, at least one of its literals has to be labelled
with character state 1. The satisfying assignment is thus: for each variable $z$, $z$ is TRUE if $z$ is labelled 1, and FALSE if $\neg z$ is labelled 1. 
 }

We now show how to choose $f(n,\epsilon)$ such that $|X|^{1-\epsilon} \cdot 1 < f(n,\epsilon)$. When chosen this way, an approximation algorithm with approximation factor $|X|^{1-\epsilon}$ will be forced
to return a solution that, \steven{as we have just described, can be transformed into} a satisfying assignment of $B$, whenever that is possible. This will be the only option, because returning a solution with parsimony score $f(n,\epsilon)$ or higher will be more than a factor $|X|^{1-\epsilon}$ larger
than the optimum, which is 1 in the case of satisfiability. Now, observe that $|X| = 2n \cdot f(n,\epsilon) + m \cdot f(n,\epsilon)$. Given the relationship between $n$ and $m$, a (crude) upper bound on $|X|$ is
$n^5 \cdot f(n,\epsilon)$, for sufficiently large $n$. Hence it is sufficient to ensure $f(n,\epsilon)^{1-\epsilon} \cdot n^{5(1-\epsilon)} < f(n,\epsilon)$. Suppose $f(n, \epsilon) = n^{g(\epsilon)}$, where $g(\epsilon)$ is a function that
only depends on $\epsilon$. Then we need $g(\epsilon)(1-\epsilon) + 5(1-\epsilon) < g(\epsilon)$, which implies that taking $g(\epsilon) = \lceil 6\epsilon^{-1}(1-\epsilon) \rceil$ is sufficient.

\steven{The network we constructed above is time-consistent (see Figure \ref{fig:sat})
%\footnote{\snew{A possible time-stamp allocates 1 to the root, 2 to all reticulation nodes (which includes $s_0$ and $s_1$) and 3 to the nodes in any single-literal
%clause gadgets, as these are tree nodes.}}
 but
not tree-child: \snew{potentially} only the root $\rho$ has at least one child that is not a reticulation. We can
transform the network as follows. For each node $v$ with indegree greater than~1 and outdegree 0 we
simply add an outgoing edge to a new node $v'$, where $v'$ receives time-stamp 3, and
$\alpha(v')$ takes over the character state $\alpha(v)$. Next we introduce $2n+2$ new
taxa. For $s_i$, $i \in \{0,1\}$, we
introduce a new node $s'_i$ (with time-stamp 3), add an edge $(s_i, s'_i)$, and set $\alpha(s'_i) = i$. For each variable $z$ in the SAT instance $B$, we introduce two new \cs{taxa}
$z'$ and $\neg z'$ (both of which receive time-stamp 3), add edges $(z,z')$ and $(\neg z, \neg z')$ and set $\alpha(z')=\alpha(\neg z')=0$. The network is now both tree-child and
time-consistent. Now, observe that the two taxa introduced underneath $s_0$ and $s_1$
do not change the optimum parsimony score, because without loss of generality we can
assume that
%in an optimal solution
$s_0$ is labelled 0 and $s_1$ is labelled 1. However, for
each variable $z$, some extra mutations might be incurred on the edges
$(z,z')$ and $(\neg z, \neg z')$. As long as $f(n,\epsilon)$ is chosen to be large enough,
these (at most) $2n$ extra mutations do not significantly alter the reduction: in optimal
solutions each $(z, \neg z)$ pair will still be labelled with different character states, reflecting
a satisfying truth assignment for $B$, whenever $B$ is satisfiable. In fact, if $B$ is
satisfiable, then \cs{exactly} $n$ extra mutations will be incurred on the edges
$(z,z')$ and $(\neg z, \neg z')$ (in an optimal solution), since at least one of $z$ and $\neg z$ will be labelled
0. So, if $B$ is satisfiable, $PS_{sw}(N, \alpha) \cs{=} n+1$ and if $B$ is unsatisfiable,
$PS_{sw}(N, \alpha) \geq f(n,\epsilon)$. As long as we choose
$|X|^{1-\epsilon}(n+1) < f(n,\epsilon)$, any $|X|^{1-\epsilon}$-approximation will
be forced to return a tree $T$ such that $PS(T,\alpha) < f(n,\epsilon)$ whenever $B$ is satisfiable,
and this can be transformed in the same way as before in polynomial time into a satisfying assignment for $B$.}

Hence we need to choose $f(n,\epsilon)$ such that
$|X|^{1-\epsilon}(n+1) < f(n,\epsilon)$, where this time 
$|X| = 2n \cdot f(n,\epsilon) + m \cdot f(n,\epsilon) + (2n+2)$. As before,
$|X| \leq n^5 \cdot f(n,\epsilon)$ holds for sufficiently large $n$, as does $n+1 < n^2$. So establishing
$f(n,\epsilon) > n^{(7-5\epsilon)}f(n,\epsilon)^{1-\epsilon}$ would
be sufficient. Letting $f(n,\epsilon)=n^{g(\epsilon)}$ and taking logarithms, it
is sufficient to choose $g(\epsilon)$ such that $g(\epsilon) > 7 - 5\epsilon + g(\epsilon)(1-\epsilon)$. Taking $g(\epsilon) = \lceil \frac{7-5\epsilon}{\epsilon} \rceil +1$ is
sufficient for this purpose, and we are done.
\end{proof}

% NEW PROOF

For binary networks, we get a slightly weaker inapproximability result.

\begin{theorem}
\label{thm:sat2}
For every constant~$\epsilon > 0$ there is no polynomial-time approximation algorithm that approximates $PS_{sw}(N,\alpha)$ to a factor $|X|^{\frac{1}{3}-\epsilon}$, where~$N$ is a rooted binary phylogenetic network on~$X$ and $\alpha$ is a binary character on~$X$, unless~P~=~NP.
\end{theorem}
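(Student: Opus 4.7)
The plan is to follow the same reduction strategy from 3-SAT as in Theorem~\ref{thm:sat}, but to redesign each gadget so that the resulting phylogenetic network is binary. Given a 3-SAT instance $B=(V,C)$ with $|V|=n$ and $|C|=m$, I will construct a binary rooted phylogenetic network $N$ on a leaf set $X$ together with a binary character $\alpha$ on $X$ such that $PS_{sw}(N,\alpha)\le q(n)$ for some fixed polynomial $q$ if $B$ is satisfiable, and $PS_{sw}(N,\alpha)\ge f(n,\epsilon)$ otherwise, for a function $f$ to be fixed at the end.

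The variable gadget would be essentially as before: for each variable $z$ introduce connector nodes $z$ and $\neg z$, and attach $f(n,\epsilon)$ state-0 and $f(n,\epsilon)$ state-1 taxa via individual reticulations whose two parents are exactly $z$ and $\neg z$; these reticulations already have indegree~2. As in Theorem~\ref{thm:sat} this forces any extension with score below $f(n,\epsilon)$ to label $z$ and $\neg z$ with distinct states. Every node whose original out-degree or in-degree exceeds~$2$ --- the switches $s_0,s_1$, the connector nodes $z,\neg z$, and the parent of each clause-taxon (indegree up to~$3$) --- would be replaced by a small binary gadget. High out-degree is handled by a binary tree of tree nodes rooted at the original node, whose nodes can all inherit the original node's label at no parsimony cost. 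High in-degree at a clause-taxon is handled by a chain of two binary reticulations: a short case analysis over switchings, using the deletion/suppression rules and Corollary~\ref{cor:tree2labelling}, shows that such a chain lets the taxon effectively select any of its three literals as parent in some displayed tree, contributing zero whenever that literal is labelled~$1$.

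I would then verify, via Lemma~\ref{lem:equivalenceDef}, the usual dichotomy. Satisfiable side: a satisfying assignment induces a switching and an extension of $\alpha$ whose only mutations are the unavoidable one on the root edges $(\rho,s_0),(\rho,s_1)$ plus at most a polynomial number $q(n)$ of mutations introduced by the binarization. Unsatisfiable side: any extension $\tau'$ on $V(N)$ and any switching with score strictly less than $f(n,\epsilon)$ must label each $\{z,\neg z\}$ pair with distinct states and must label at least one literal of each clause with state~$1$, extracting a satisfying assignment, a contradiction. Thus a polynomial-time algorithm that, on input $(N,\alpha)$, returns a tree $T \in \cT(N)$ with $PS(T,\alpha)<f(n,\epsilon)$ whenever $B$ is satisfiable would decide 3-SAT in polynomial time.

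The final step is to choose $f(n,\epsilon)$ so that any polynomial-time $|X|^{1/3-\epsilon}$-approximation is forced to return such a solution in the satisfiable case, i.e.\ so that $|X|^{1/3-\epsilon}\cdot q(n)<f(n,\epsilon)$; picking $f(n,\epsilon)=n^{g(\epsilon)}$ for a suitable $g(\epsilon)=O(1/\epsilon)$ will then complete the proof. The exponent $1/3$ (rather than $1-\epsilon$ as in Theorem~\ref{thm:sat}) comes from the fact that the binary restriction forces $|X|$ to grow with $f(n,\epsilon)$ more aggressively than in the non-binary construction; in particular, the binarization of the clause gadget and of the high-degree connector machinery appears to inflate $|X|$ to roughly $n^{c_1}f(n,\epsilon)^3$ for some small constant $c_1$. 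The main obstacle --- and the delicate part of the proof --- will be precisely this accounting: verifying that the reticulation chain in the clause gadget, together with the tree replacements of $s_0,s_1$ and of the connector nodes, cannot be exploited by a switching to bypass the intended truth-assignment correspondence, and bounding the resulting $|X|$ tightly enough to justify the $1/3-\epsilon$ exponent.
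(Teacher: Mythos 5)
There is a genuine gap, and it sits exactly where you defer the work. Once you binarize the connector nodes $z$ and $\neg z$ by replacing their large out-degree with binary out-trees of tree nodes, the claim that ``as in Theorem~\ref{thm:sat} this forces any extension with score below $f(n,\epsilon)$ to label $z$ and $\neg z$ with distinct states'' no longer buys you anything: the label of the single node $z$ is no longer what its children see. An adversarial extension can label the reticulation $z$ with one state and relabel almost all of its out-tree with the other state at the cost of a few boundary mutations, so the out-trees of $z$ and of $\neg z$ can \emph{simultaneously} present state~1 to their clause gadgets while still presenting state~0 to the $X_{z,0}$-reticulations. Whether this cheat can be blocked by carefully interleaving the clause-facing and variable-gadget-facing attachment points inside each out-tree is precisely the ``delicate part'' you acknowledge but do not carry out, and it is the entire content of the unsatisfiable-side lower bound. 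Your $|X|$ accounting is also internally inconsistent: the construction you describe adds no taxa beyond those of Theorem~\ref{thm:sat} (binarizing out-degree with tree nodes and in-degree with reticulation chains introduces no new leaves), so it has $|X|=O\big((n+m)f(n,\epsilon)\big)$, which would point to a $|X|^{1-\epsilon}$ bound rather than $|X|^{1/3-\epsilon}$; the claimed inflation to $n^{c_1}f(n,\epsilon)^3$ has no source in your construction and appears to be reverse-engineered from the statement of the theorem.

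The paper's proof uses a genuinely different mechanism that your sketch is missing. It hangs each literal reticulation between two long paths leaving the root (so each literal can be labelled $0$ or $1$ for free), and enforces consistency between $x$ and $\neg x$ not through a shared variable gadget but through \emph{zero-gadgets}: for every pair consisting of an occurrence of $x$ and an occurrence of $\neg x$, the two literal-to-clause paths are coupled by $F$ fresh reticulations, each carrying a state-$0$ leaf, so that an all-$1$ path from $\neg x$ to one of its clause vertices forces a state-$0$ vertex onto every path from $x$ to its clause vertices, on pain of $F$ extra mutations. These $\Theta(F^2)$ gadgets of $F$ leaves each, per variable, are what drive $|X|$ up to $\Theta(|V||C|^2F^3)$ and hence produce the exponent $\tfrac13$. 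Without the zero-gadgets (or a fully worked-out substitute such as the interleaving argument alluded to above), your reduction does not establish that unsatisfiability forces score at least $f(n,\epsilon)$, and the theorem does not follow.
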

\begin{proof}
We reduce again from 3-SAT. Let, as before, $B=(C,V)$ be an instance of 3-SAT. Let $V=\{v_1,\ldots ,v_n\}$ and $F:=f(n,\epsilon)$. We will describe a construction of a rooted binary phylogenetic network~$N$ and binary character~$\alpha$. The first part of the construction is essentially a binary version of the network constructed in the proof of Theorem~\ref{thm:sat}. The main difference will be the construction of the so-called ``zero-gadgets''. For ease of notation, we will create vertices with indegree-1 and outdegree-1, which could be suppressed, and we create reticulations with indegree greater than~2, which could be refined arbitrarily. Observe that neither suppressing indegree-1 and outdegree-1 vertices nor refining reticulations with indegree greater than~2 alters the softwired parsimony score. However, to simplify the proof we do not suppress or refine these vertices. Furthermore, we assume without loss of generality that each literal is contained in at least one clause.

Our construction is as follows. We create a root~$\rho$ with two directed paths $(\rho,a_1,\ldots ,a_{2n})$ and $(\rho,b_1,\ldots ,b_{2n})$ leaving it. Then, for each variable~$v_i$, we create a reticulation vertex which we will also call~$v_i$ and has reticulation edges $(a_{2n-2i+2},v_i)$, $(b_{2i-1},v_i)$. Moreover, we create a reticulation vertex~$\neg v_i$ with reticulation edges $(a_{2n-2i+1},\neg v_i),(b_{2i},\neg v_i)$. The vertices~$v_i$ and~$\neg v_i$ are called ``literal vertices''. Then, for each clause~$c$, create~$F$ vertices $c_1,\ldots ,c_F$, which we will call ``clause vertices'', and for each such clause vertex~$c_f$, create an edge $(c_f,c'_f)$ to a new leaf $c'_f$ with character state $\alpha(c'_f)=1$ (the ``clause leaves''). So, in total we have~$|C|F$ clause leaves.

We now connect the literal vertices to the clause vertices. For each variable~$x$ and its negation~$\neg x$, we do the following. Suppose that~$x$ is in~$T$ clauses, corresponding to~$F\cdot T$ clause vertices, $c_x^1,\ldots ,c_x^{F\cdot T}$. Create a directed path $(x,d_x^1,\ldots ,d_x^{F\cdot T})$ and edges $(d_x^k,c_x^k)$ for~$k=1,\ldots ,{F\cdot T}$. Similarly, if~$\neg x$ is in~$\Theta$ clauses with~$F\cdot\Theta$ corresponding clause vertices $(\gamma_x^1,\ldots ,\gamma_x^{F\cdot\Theta})$, we create a directed path $(\neg x,\delta_x^1,\ldots ,\delta_x^{F\cdot\Theta})$ and edges $(\delta_x^\kappa,\gamma_x^\kappa)$ for~$\kappa=1,\ldots ,{F\cdot\Theta}$. Now, for each $k\in\{1,\ldots ,F\cdot T\}$ and for each $\kappa\in\{1,\ldots ,F\cdot\Theta\}$, we create the following ``zero-gadget''. Let~$u$ be the parent of~$c_x^k$ that is reachable from~$x$ (hence, in the first iteration, $u=d_x^k$). Replace the edge~$(u,c_x^k)$ by a directed path $(u,u_1,\ldots ,u_F,c_x^k)$. Similarly, let~$\mu$ be the parent of~$\gamma_x^\kappa$ that is reachable from~$\neg x$ (initially, $\mu=\delta_x^\kappa$), and replace the edge $(\mu,\gamma_x^\kappa)$ by a directed path $\mu,\mu_1,\ldots ,\mu_F,\gamma_x^\kappa$. Then, for~$f=1,\ldots ,F$, create a new reticulation~$z_f$, with reticulation edges~$(u_f,z_f)$ and~$(\mu_f,z_f)$, and an edge $(z_f,z'_f)$ to a new leaf~$z'_f$ with character state $\alpha(z'_f)=0$.

See Figure~\ref{fig:binarySAT} for an example of the construction for~$F=1$. For larger~$F$, the construction is similar but with more copies of each clause vertex, and more copies of each zero-gadget.

The constructed network~$N$ has $|C|F$ clause leaves (all having character state~1) and at most~$|V||C|^2F^3$ leaves for the zero-gadgets (all having character state~0). Hence, the total number of leaves is at most $|C|F + |V||C|^2F^3$.

\begin{figure}
\centering
\includegraphics[scale=.6]{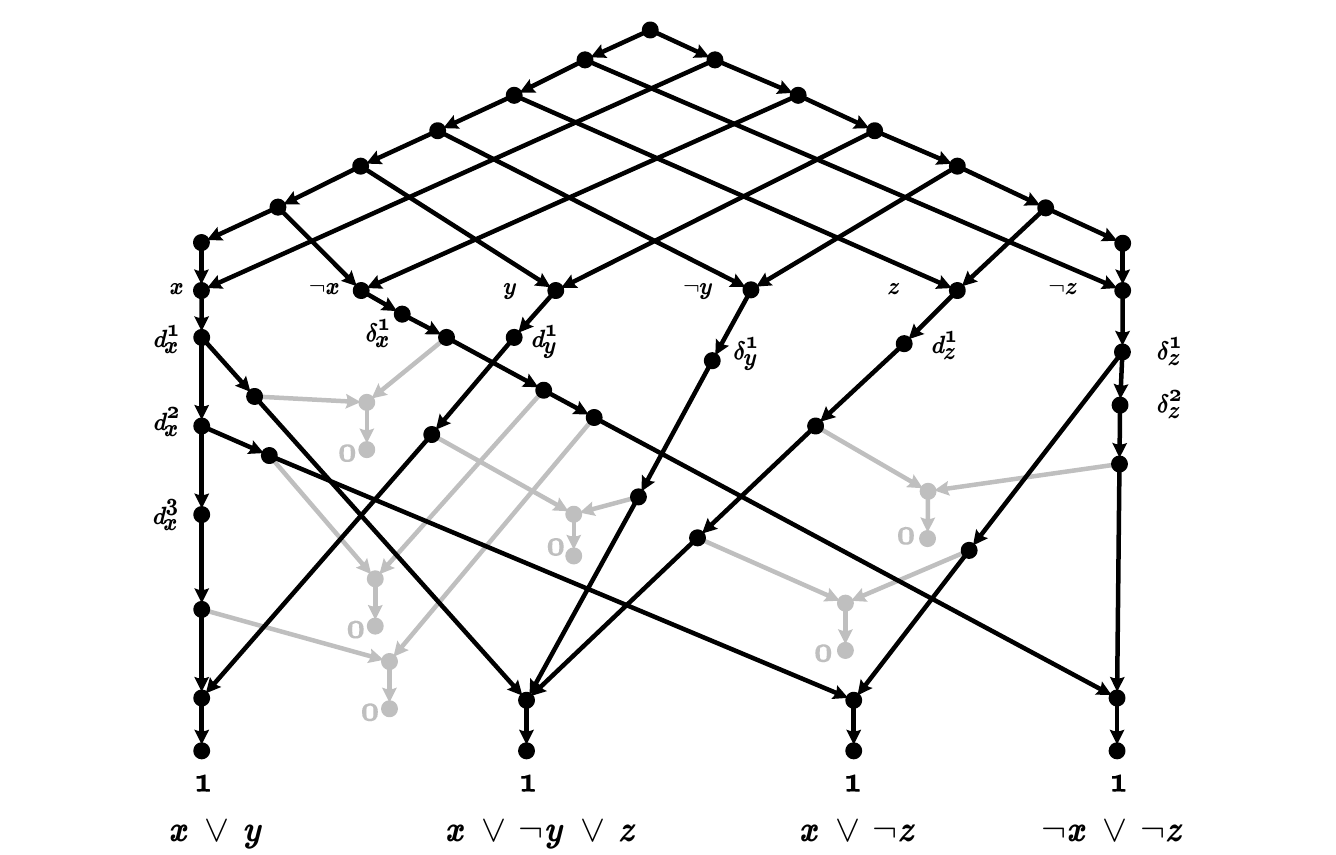}
\caption{An encoding of the 3-SAT instance $(x \vee y) \wedge (x \vee \neg y \vee z)\wedge (x\vee \neg z) \wedge (\neg x\vee\neg z)$ as described in Theorem \ref{thm:sat2}, with~$F=1$. The zero-gadgets are indicated in grey.
\label{fig:binarySAT}}
\end{figure}

We will show that if~$(C,V)$ is satisfiable, $PS(N_\text{sw},\alpha)=1$ and that if~$(C,V)$ is not satisfiable, $PS(N_\text{sw},\alpha)\geq F$. We will use the following definitions. For two vertices~$u$ and~$v$, we say that~$v$ is a \emph{tree-descendant} of~$u$ if~$v$ is reachable from~$u$ by a directed path that does not contain any reticulations apart from possibly~$u$. In particular, each vertex is a tree-descendant of itself. Furthermore, given an extension~$\tau$ of~$\alpha$ to~$V(N)$, we say that there is a \emph{change at} vertex~$v$ if the character state of~$v$ is different from the character states of all its parents. The \emph{number of changes} of network~$N$ and extension~$\tau$ is the number of vertices at which there is a change. It follows from Lemma~\ref{lem:equivalenceDef} that the softwired parsimony score of a network~$N$ and character~$\alpha$ is equal to the minimum number of changes over all possible extensions~$\tau$ of~$\alpha$ to~$V(N)$.

First suppose that~$(C,V)$ is satisfiable. Then, given a satisfying truth assignment, we can assign character states as follows. All vertices on the path $(\rho,a_1,\ldots ,a_{2n})$ and all clause vertices~$c_f$ receive state~1. All vertices on the path $(b_1,\ldots ,b_{2n})$ and all reticulations~$z_f$ of zero-gadgets receive state~0. For each variable~$x$ that is set to true by the truth assignment, we give state~1 to all tree-descendants of literal vertex~$x$ and state~0 to all tree-descendants of literal vertex~$\neg x$. Similarly, for each variable~$x$ that is set to false by the truth assignment, we give state~0 to all tree-descendants of literal vertex~$x$ and state~1 to all tree-descendants of literal vertex~$\neg x$. This concludes the assignment of character states. Now observe the following. Consider a clause~$c$ and a corresponding clause vertex~$c_f$, which has state~1. Since~$c$ is satisfied by the truth assignment, at least one parent of~$c_f$ has also state~1. Hence, there are no changes at the clause vertices. Moreover, for each reticulation~$z_f$ of a zero-gadget (which has state~0), there is at least one parent that also has state~0 because for each variable~$x$ either all tree-descendants of~$x$ or all tree-descendants of~$\neg x$ have state~0. Using these observations, it can easily be checked that the only change is at~$b_1$. Hence, if~$(C,V)$ is satisfiable, $PS(N_\text{sw},\alpha)=1$.

Next, we show that if~$(C,V)$ is not satisfiable, $PS(N_\text{sw},\alpha)\geq F$. We do this by assuming that $PS(N_\text{sw},\alpha) < F$ and showing that this implies that $(C,V)$ is satisfiable. Let~$\tau$ be an extension of~$\alpha$ to~$V(N)$ with less than~$F$ changes. For a positive literal~$x$ and clause vertex~$c_x^k$ for a clause containing~$x$, let $P(x,c_x^k)$ denote the directed path from~$d_x^k$ to~$c_x^k$ (with $d_x^k$ as defined in the construction of~$N$). Similarly, for a negative literal~$\neg x$ and clause vertex~$\gamma_x^\kappa$ for a clause containing~$\neg x$, let $P(\neg x,\gamma_x^\kappa)$ denote the directed path from~$\delta_x^\kappa$ to~$\gamma_x^\kappa$ (with $\delta_x^\kappa$ as defined in the construction of~$N$). Moreover, for any literal~$\ell$ (of the form~$x$ or~$\neg x$) and clause vertex~$c_f$ for a clause containing~$\ell$, let~$P'(\ell,c_f)$ denote path~$P(\ell,c_f)$ excluding its first vertex. We compute a truth assignment as follows. A variable~$x$ is set to true if and only if for some clause vertex~$c_x^k$ for a clause containing~$x$ holds that all vertices on the path $P(x,c_x^k)$ have state~1. We now prove that the obtained truth assignment is a satisfying truth assignment. Assume that a certain clause~$c$ is not satisfied. Consider a clause vertex~$c_f$ corresponding to clause~$c$. Observe that, for two different clause vertices~$c_{f_1},c_{f_2}$ corresponding to clause~$c$ and for any two literals~$\ell_1,\ell_2$ contained in clause~$c$, the paths $P'(\ell_1,c_{f_1}),P'(\ell_2,c_{f_2})$ are vertex-disjoint. Hence, since~$\tau$ has less than~$F$ changes, there exists at least one clause vertex~$c_f$ corresponding to clause~$c$ for which there are no changes at~$c'_f$ or at any vertex on a directed path $P'(\ell,c_f)$ for any literal~$\ell$ contained in clause~$c$. Since~$c'_f$ has state~1, it follows that~$c_f$ has state~1, and hence that at least one parent of~$c_f$ has state~1, and hence that there exists at least one literal~$\ell$ contained in clause~$c$ such that all vertices on the path $P(\ell,c_f)$ have state~1. If~$\ell$ is of the form~$x$ (a positive literal), then this immediately implies that~$\ell$ is set to true contradicting the assumption that clause~$c$ is not satisfied. Now consider the case that~$\ell$ is of the form~$\neg x$ (a negative literal). Let~$\kappa$ be such that $\gamma_x^\kappa=c_f$. We have shown that all vertices on the path~$P(\ell,c_f)$ from~$\delta_x^\kappa$ to~$\gamma_x^\kappa=c_f$ have state~1. For every $k\in\{1,\ldots ,F\cdot T\}$, there is a zero-gadget for~$x$,~$k$ and~$\kappa$. Each such zero-gadget contains a directed path $(u_1,\ldots ,u_F)$ on~$P(x,c_x^k)$ and a directed path $\mu_1,\ldots ,\mu_F$ on~$P(\neg x,\gamma_x^\kappa)$. Since all vertices on the path $P(\neg x,\gamma_x\kappa)$ have state~1, and there are less than~$F$ changes, at least one vertex of the path $(u_1,\ldots ,u_F)$ has state~0. Hence, at least one vertex on $P(x,c_x^k)$ has state~0 for all~$k\in\{1,\ldots ,F\cdot T\}$. It follows that~$x$ is set to false and hence that literal~$\ell=\neg x$ is set to true, contradicting the assumption that clause~$c$ is not satisfied. Therefore, we have shown that, if~$(C,V)$ is not satisfiable, $PS(N_\text{sw},\alpha)\geq F$.

It remains to describe how to choose $F=f(n,\epsilon)$ such that $|X|^{\frac{1}{3}-\epsilon}<f(n,\epsilon)$. Recall that $|X|\leq |C|f(n,\epsilon) + |V||C|^2f(n,\epsilon)^3$. Then, with~$n=|V|$ and recalling that~$|C|=O(n^3)$, we can bound this by $|X| \leq n^8 f(n,\epsilon)^3$ for sufficiently large~$n$. Hence, it is enough to show that $n^{8(\frac{1}{3}-\epsilon)} f(n,\epsilon)^{3(\frac{1}{3}-\epsilon)} < f(n,\epsilon)$. Taking $f(n,\epsilon) = n^{g(\epsilon)}$, we need $8(\frac{1}{3}-\epsilon)+3(\frac{1}{3}-\epsilon)g(\epsilon) < g(\epsilon)$. Hence, it is sufficient to take $g(\epsilon) = \lceil\frac{8}{9\epsilon}\rceil$.
\end{proof}

In particular, Theorem~\ref{thm:sat2} shows that there can be no~$O(\log(|X|))$-approximation for computing the softwired parsimony score of a binary rooted phylogenetic network, unless~P~=~NP. We remark that the network constructed in the proof of Theorem~\ref{thm:sat2} can not easily be made tree-child. Hence the inapproximability of binary tree-child networks is still open. It does seem that the constructed network can be made time-consistent but we omit a proof.

Although we have shown above that there is no algorithm for computing the softwired parsimony score that is fixed-parameter tractable in the parsimony score (unless~P~=~NP), there obviously exists such an algorithm that is fixed-parameter tractable in the reticulation number of the network: \snew{a network with reticulation number $r$}
has $2^r$ switchings, and for each switching Fitch's algorithm can be used. Moreover, in the next section we show that there even exists an algorithm that is fixed-parameter tractable in the level of the network, \snew{a parameter potentially much smaller than reticulation number}.

\section{An FPT Algorithm in the Level of the Network for Computing the Softwired Parsimony Score of a Network}\label{sec:FPT}

In the first part of this section we describe a polynomial-time dynamic programming \snew{(DP)} algorithm that works
on rooted trees and computes a slight generalisation of the softwired parsimony score.  We then show how this can be used as a subroutine in computing
the softwired parsimony score of networks, such that the running time is fixed-parameter tractable (FPT) in the level of
the network.

\subsection{A DP Algorithm for (not Necessarily Phylogenetic) Rooted Trees with Weights \label{subsec:treeDP}}

\cs{Let $\mathcal{P} = \{1, \ldots, p \}$} be the set of character states. Let $T$ be a rooted tree, and let $L(T)$ be the set of leaves of $T$.  $T$ is not necessarily a phylogenetic tree because only a subset $L \subseteq L(T)$ need to be \cs{labelled}, and $T$ is allowed to have \snew{nodes} with indegree and outdegree both equal to 1. (Later on, we will see that this allows us to model switchings). For a \snew{node} $v \in V(T)$, let $T_{v}$ be the subtree of $T$ rooted at $v$.

We are given a $p$-state character $\alpha : L \rightarrow \mathcal{P}$. Additionally, we are given a function
$w : (V(T) \times \mathcal{P}) \rightarrow \mathbb{N}$, where $\mathbb{N} = \{0,1,\ldots\}$.
%For
%a taxon $x \in X$, we can assume that $w(x, c) = 0$ if $c=\alpha(x)$, and
%$w(x,c) = \infty$ otherwise.

\snew{Consider the following definition, where $c_{\tau}$ is the \emph{change} function described in the preliminaries
and the minimum ranges over all extensions $\tau$ of $\alpha$ to $V(T)$.}
 %The
%goal is to compute an extension $\tau$ of $\alpha$ to $V(T)$ which minimises the
%following, where $c_{\tau}$ is the \emph{change} function described in the preliminaries.
\begin{equation}
\label{eq:def}
PS_{sw}(T, \alpha, w) = \min_{\tau} \bigg ( \bigg ( \sum_{e \in E(T)} c_{\tau}(e) \bigg ) + \bigg ( \sum_{v \in V(T)} w(v,\tau(v)) \bigg ) \bigg ).
\end{equation}
We can think of this as being the parsimony score with an optional added ``weighting'' that to varying degrees \snew{``penalises"} \snew{nodes} when they are allocated a certain character state. This weighting $w(v,s)$ will be used in the next section to model the contribution to the optimum parsimony score of the subnetworks of $N$ rooted at $v$ when $v$ is forced to be labelled with character state $s$.
 
To compute $PS_{sw}(T, \alpha,w)$ we introduce the value
$PS_{sw}(T, \alpha, w,s)$, with $s \in \mathcal{P}$, where we add the restriction that the root
of $T$ must be labelled with character state $s$. Clearly,
\begin{equation}
\label{eq:minrange}
PS_{sw}(T, \alpha,w) = \min_{s \in \mathcal{P}} PS_{sw}(T, \alpha, w,s)
\end{equation}
We denote by $PS_{sw}(T, \alpha,w, \cdot)$ the vector $(PS_{sw}(T, \alpha,w, 1), \cdots, \snew{PS}_{sw}(T, \alpha,w, p))$. 
This vector is computed as described in Algorithm \ref{alg:PSw}, where $\delta(s,s') = 0$ if
$s = s'$ and 1 otherwise, and $C(v)$ is the set of children of a non-leaf \snew{node} $v$.  \snew{Note that
the optimal $\tau$ can be constructed by backtracking, if necessary}.

\begin{algorithm}[H]
\small{
\For{each \snew{node} $v$ of $V(T)$ considered in post-order}{
\If{$v$ is a leaf}{
\If{$v \in L$}{ $PS_{sw}(T_v, \alpha, w,s) = w(v,s)$, if $s=\alpha(v)$ and $\infty$ otherwise;}
\Else{$PS_{sw}(T_v, \alpha,w,s) = w(v,s)$, for each $s \in \mathcal{P}$;}
}
\Else{
$PS_{sw}(T_v, \alpha,w,s) = w(v,s) + \displaystyle\sum_{v' \in C(v)} \bigg ( \min_{s' \in \mathcal{P}} \bigg ( PS_{sw}(T_{v'}, \alpha, w,s') + \delta(s,s') \bigg ) \bigg )$, $\forall s \in \mathcal{P}$;
}
}
\Return  $PS_{sw}(T, \alpha, w,\cdot)$; \texttt{//note that $T=T_{root(T)}$}
}
\caption{Compute $PS_{sw}(T, \alpha,w,\cdot)$ \label{alg:PSw}}
\end{algorithm}

\medskip
The running time of Algorithm~\ref{alg:PSw} is $O(p^2|V(T)|)$.

%\snew{We state the following lemma without proof.}

\snew{
\begin{lemma}
\label{lem:alg1correct}
Algorithm \ref{alg:PSw} correctly computes $PS_{sw}(T_v, \alpha,w,\cdot)$, for every $v \in V(T)$. In particular,
it correctly computes $PS_{sw}(T,\alpha,w,\cdot)$.
\end{lemma}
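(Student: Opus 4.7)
The plan is to proceed by structural induction on $T_v$, mirroring the post-order in which Algorithm~\ref{alg:PSw} visits the nodes. Equivalently, the induction is on the height of $T_v$. In view of equation~(\ref{eq:minrange}), it suffices to show that Algorithm~\ref{alg:PSw} correctly computes $PS_{sw}(T_v,\alpha,w,s)$ for every $v \in V(T)$ and every $s \in \mathcal{P}$.

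For the base case, $v$ is a leaf and $T_v$ consists of the single node $v$ with no edges. Hence for any extension $\tau$, the objective in (\ref{eq:def}) reduces to $w(v,\tau(v))$. If $v \in L$, then $\tau(v) = \alpha(v)$ is forced, so the only admissible choice with root state $s$ is $s = \alpha(v)$, giving $w(v,s)$, while any other $s$ is infeasible, justifying the value $\infty$. If $v \notin L$, then $\tau(v) = s$ is admissible for every $s \in \mathcal{P}$ and yields objective $w(v,s)$. Both cases match the formulas assigned by the algorithm.

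For the inductive step, let $v$ be an internal node with children $C(v) = \{v_1,\ldots,v_k\}$, and assume the claim holds for each $T_{v_i}$. Fix $s \in \mathcal{P}$ and consider any extension $\tau$ of $\alpha$ to $V(T_v)$ with $\tau(v)=s$. Since the node sets $V(T_{v_i})$ are pairwise disjoint and partition $V(T_v)\setminus\{v\}$, and the edge set of $T_v$ is the disjoint union of $\{(v,v_i) : 1\le i\le k\}$ and the edge sets $E(T_{v_i})$, the objective in (\ref{eq:def}) restricted to root state $s$ decomposes as
\[
w(v,s) \;+\; \sum_{i=1}^{k} \Bigl( c_\tau(v,v_i) \;+\; \sum_{e \in E(T_{v_i})} c_\tau(e) \;+\; \sum_{u \in V(T_{v_i})} w(u,\tau(u)) \Bigr).
\]
Because the summand for index $i$ depends only on $s$ and on the restriction $\tau|_{V(T_{v_i})}$, we may minimise each term independently. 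Fixing $\tau(v_i)=s'$ and noting that $c_\tau(v,v_i) = \delta(s,s')$, the minimum of the $i$-th summand is exactly $\min_{s' \in \mathcal{P}}\bigl(PS_{sw}(T_{v_i},\alpha,w,s') + \delta(s,s')\bigr)$ by the inductive hypothesis. Summing over $i$ recovers precisely the recurrence used by Algorithm~\ref{alg:PSw}, so $PS_{sw}(T_v,\alpha,w,s)$ is computed correctly. Combined with (\ref{eq:minrange}), this proves the lemma.

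The only subtle point is justifying the decomposition of the minimisation into independent sub-minimisations over the children; this is where I would be most careful in writing the full proof, but it follows cleanly from the fact that $T_v$ is a tree, so the only interaction between the child subtrees in the objective is mediated entirely through the value $\tau(v)=s$, which is fixed.
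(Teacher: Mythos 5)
Your proof is correct and takes essentially the same approach as the paper: the paper's proof is only a sketch of the optimal-substructure argument (once the root's state $s$ is fixed, each child's subproblem can be solved independently and optimally), and your structural induction with the explicit decomposition of the objective is precisely the fleshed-out version of that sketch. The base-case handling of labelled versus unlabelled leaves and the exchange of minimisation with the sum over children are exactly the details the paper omits.
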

\begin{proof}
\emph{(Sketch)} This follows from the fact that, if the state of a node $v$ is fixed as $s$, then the only 
local decisions that have to be made
to optimize $PS_{sw}(T_v, \alpha,w,s)$ are to choose the character state~$s'$ for each child~$v'$ of~$v$. A change is incurred whenever $s'\neq s$.
%, are whether to incur mutations on the edges from $v$ to its
%children. This is caused by choosing, for each child $v'$ of $v$, a state $s' \neq s$ (if we wish to incur a mutation)
%or $s' = s$ (if we do not).
Once $s'$ has been chosen, we are free to (and therefore should) use optimal subsolutions corresponding
to the case when the root of subtree $T_{v'}$ has state $s'$ i.e. $PS_{sw}(T_{v'}, \alpha,w,s')$. We omit details.
\end{proof}
}

The following lemma shows that Algorithm~\ref{alg:PSw} can be used to compute the parsimony score of a phylogenetic tree.

\begin{lemma}
Consider a rooted tree $T$ on $X$ and a $p$-state character $\alpha$ on $X$. Then, if 
$w(v,s)=0$ for all $v \in V(T)$ and $s \in \mathcal{P}$, then $PS_{sw}(T, \alpha,w)=PS_{sw}(T, \alpha)$.
\label{lem:DP_trees}
\end{lemma}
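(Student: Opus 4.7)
The plan is to show that both sides of the claimed equality reduce to the same minimum over extensions $\tau$ of $\alpha$ to $V(T)$, by direct substitution into the definitions.

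First I would unfold the left-hand side. Setting $w(v,s)=0$ for every $v\in V(T)$ and $s\in\mathcal{P}$ in definition~(\ref{eq:def}) makes the second summation vanish identically, so that
\[
PS_{sw}(T,\alpha,w) \;=\; \min_{\tau} \sum_{e\in E(T)} c_{\tau}(e),
\]
where the minimum ranges over all extensions $\tau$ of $\alpha$ to $V(T)$. Here I would briefly note the compatibility of the two setups: in the lemma, $T$ is a phylogenetic tree on $X$, so $L(T)=X$ is fully labelled (so the generalisation that only a subset $L\subseteq L(T)$ needs to be labelled collapses to the usual case), and $T$ contains no indegree-1 outdegree-1 nodes, so the enlarged node set that Section~\ref{subsec:treeDP} would permit coincides with the standard $V(T)$.

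Next I would unfold the right-hand side. Since $T$ is a rooted phylogenetic tree, the only tree displayed by $T$ is $T$ itself, i.e.\ $\cT(T)=\{T\}$: this follows immediately from the definition of ``displayed'' (a subgraph of $T$ obtained by suppressing degree-2 non-root nodes cannot introduce new leaves, and removing any internal edge would either disconnect the graph or remove a labelled leaf). Therefore
\[
PS_{sw}(T,\alpha) \;=\; \min_{T'\in\cT(T)}\min_{\tau}\sum_{e\in E(T')} c_{\tau}(e) \;=\; \min_{\tau}\sum_{e\in E(T)}c_{\tau}(e),
\]
with the same range of $\tau$ as above.

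Comparing the two displays gives $PS_{sw}(T,\alpha,w)=PS_{sw}(T,\alpha)$. There is no real obstacle here; the only point worth spelling out is the observation that $\cT(T)=\{T\}$, so that the outer minimum on the softwired side is trivial and the two objective functions coincide term-by-term.
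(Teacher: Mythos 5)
Your proof is correct and follows essentially the same route as the paper's: substitute $w\equiv 0$ into definition~(\ref{eq:def}) so that the weighting term vanishes, and observe that the resulting expression is exactly the parsimony score of the tree $T$ itself. Your explicit remark that $\cT(T)=\{T\}$ for a phylogenetic tree, which makes the outer minimum in the softwired definition trivial, is a useful detail that the paper's sketch leaves implicit.
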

\begin{proof}
\snew{This follows by combining Lemma \ref{lem:alg1correct} with (\ref{eq:minrange}) and (\ref{eq:def}). In particular,
in (\ref{eq:def}) the right-hand side of the expression degenerates to the familiar parsimony definition, because $w$ is 0 everywhere.}
\end{proof}

\subsection{Extending the DP Algorithm to Networks}
\label{subsec:extendDP}

Let $N$ be a level-$k$ network on $X$. We say that a biconnected component is \emph{trivial} if it consists of a single edge (a cut-edge). Thanks to the \snew{following} results, we can envisage $N$ as comprising non-trivial biconnected components, each with reticulation number at most $k$, arranged in a tree-like backbone. 

\begin{lemma}
Let $N$ be a rooted phylogenetic network on $X$ and let $B$ be a biconnected component of $N$. Then $B$ contains exactly one node $r_B$ without ancestors in $B$.
\label{lem:proofOneRoot}
\end{lemma}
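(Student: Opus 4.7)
Existence is straightforward---start at any node of $B$ and repeatedly move to an ancestor that still lies in $B$; because $N$ is acyclic this terminates at a node with no ancestors in $B$. The substantive part is uniqueness, which I would prove by contradiction. Suppose $r_1, r_2 \in B$ are distinct and each has no ancestors in $B$. If $B$ is trivial, i.e.\ a single directed edge $u \to v$, then $\{r_1,r_2\}=\{u,v\}$ and the head has the tail as an ancestor in $B$, contradiction; so assume $B$ is non-trivial and hence $2$-vertex-connected in the underlying undirected graph. Let $\rho$ be the root of $N$ and fix directed paths $P_1, P_2$ in $N$ from $\rho$ to $r_1, r_2$ respectively. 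Let $a$ be the last node of $P_1$ that also lies on $P_2$ (well-defined since $\rho$ lies on both). Observe that $a \notin \{r_1,r_2\}$: for instance, $a = r_1$ would force $r_1 \in B$ to be an ancestor of $r_2$ via the continuation of $P_2$, contradicting the assumption on $r_2$. Let $P_1', P_2'$ be the sub-paths of $P_1, P_2$ running from $a$ to $r_1$ and from $a$ to $r_2$; by choice of $a$ they meet only at $a$, so $U := P_1' \cup P_2'$ is a simple undirected path in $N$ through $a$.

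The crux is to show that $a \in B$, for then $P_1' \subseteq B$ is a directed path in $B$ from $a \neq r_1$ to $r_1$, contradicting the assumption on $r_1$. Since $B$ is connected, pick a simple undirected path $Q \subseteq B$ from $r_1$ to $r_2$, and consider $H := U \cup Q$ as a subgraph of the underlying undirected graph of $N$. For any edge $e = \{x,y\}$ of $P_1'$, one can travel in $H$ from $x$ to $y$ while avoiding $e$, by walking along $P_1'$ back to $a$, then along $P_2'$ to $r_2$, then along $Q$ to $r_1$, and finally along $P_1'$ back to $y$. Hence $e$ lies on a simple cycle of $H$; since $U$ is itself a simple path and contains no cycle, this simple cycle must use at least one edge of $Q$. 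Invoking the standard fact that two edges lie in a common biconnected component if and only if they lie on a common simple cycle, and using $Q \subseteq B$, we conclude $e \in B$. Applying this to every edge of $P_1'$ places all nodes of $P_1'$ inside $B$; in particular $a \in B$, the desired contradiction.

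The principal obstacle is precisely the step $a \in B$, which simultaneously calls on three ingredients: the simple-cycle characterisation of biconnected components, the acyclicity of $N$ (so that the tails $P_1'$ and $P_2'$ of the root-to-$r_i$ paths are internally vertex-disjoint), and the connectedness of $B$ (to furnish the path $Q$). Everything else is routine bookkeeping around these tools.
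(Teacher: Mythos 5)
Your proof is correct and follows essentially the same strategy as the paper's: assume two roots $r_1,r_2$ of $B$, splice the root-to-$r_i$ paths at their last common node $a$ into a simple undirected $r_1$--$r_2$ path, show that this path lies entirely in $B$, and conclude that one of the $r_i$ has an ancestor in $B$. The only divergence is the justification of the containment step --- the paper argues that a simple path between two nodes of $B$ cannot leave $B$ without passing through an articulation node twice, whereas you invoke the common-simple-cycle characterisation of biconnected components via an auxiliary path $Q\subseteq B$ --- and both are standard, with your version if anything spelled out more carefully.
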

\begin{proof}
Suppose there exist two roots in $B$, $r_1$ and $r_2$. In a rooted network $N$ there always exists a directed path from the root of $N$ to each node in $N$. Hence there exists a node $v$ in $N$ such that there is a simple directed
path from $v$ to $r_1$, and a simple directed path from $v$ to $r_2$. (Note that we do not exclude the possibility
that $v \in \{r_1, r_2\}$.) By merging these two paths we see that there is an undirected
simple path $P$ between $r_1$ and $r_2$ such that for at least one of $r_1$ and $r_2$ the edge of $P$ incident to it is oriented towards it. We want to argue that all nodes and edges of $P$ are also in $B$, which will contradict the assumption that $r_1$ and $r_2$ are both roots of $B$. In fact, it holds that if any two nodes $u$ and $v$ in $B$ have a simple undirected path $P$ between them, all nodes and edges of $P$ are also in $B$. If this was not true, then $P$ would contain some node
not in $B$, and this in turn would mean that, in the journey from $u$ to $v$, $P$ would have to pass through some cut \snew{node} twice, contradicting
its simplicity. Hence all nodes of $P$ are in $B$, and by maximality all of the edges of $P$ are too.
\end{proof}

\begin{lemma}
Let $N$ be a rooted phylogenetic network on $X$. Then, if $r$ is a reticulation, all incoming edges of $r$ are in the same biconnected component of $N$.\label{lem:proofOneComponent}
\end{lemma}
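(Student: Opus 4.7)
My plan is to show that any two distinct incoming edges of a reticulation lie on a common simple cycle in the underlying undirected graph; this suffices because a standard graph-theoretic fact says that two edges of an undirected graph belong to the same block (maximal biconnected subgraph) if and only if they coincide or lie on a common simple cycle.

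Let $r$ be a reticulation and let $e_i=(u_i,r)$ and $e_j=(u_j,r)$ be two distinct incoming edges, so $u_i\neq u_j$ (we are assuming there are no multi-edges). Since $N$ is a rooted DAG with a unique root $\rho$, there exist directed paths $P_i$ from $\rho$ to $u_i$ and $P_j$ from $\rho$ to $u_j$. The key step is to extract from $P_i$ and $P_j$ two internally node-disjoint paths ending at $u_i$ and $u_j$: let $w$ be the node on $P_i$ closest to $u_i$ that also lies on $P_j$ (such a node exists because $\rho$ lies on both). Let $P_i'$ be the subpath of $P_i$ from $w$ to $u_i$ and $P_j'$ the subpath of $P_j$ from $w$ to $u_j$. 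By the choice of $w$, no internal node of $P_i'$ lies on $P_j$, and in particular no internal node of $P_i'$ is an internal node of $P_j'$; so $P_i'$ and $P_j'$ share only the endpoint $w$.

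Now consider, in the underlying undirected graph, the closed walk obtained by traversing $P_i'$ from $w$ to $u_i$, then $e_i$ to $r$, then $e_j$ backwards to $u_j$, then $P_j'$ backwards to $w$. The internal-node-disjointness of $P_i'$ and $P_j'$ handles most of the simplicity check. What remains is to verify that $r$ does not appear on $P_i'$ or $P_j'$, and that $w\neq r$. Both follow immediately from the fact that $N$ is a DAG: $r$ is a descendant of both $u_i$ and $u_j$, so $r$ cannot be an ancestor of either, hence $r$ is not on any directed path terminating at $u_i$ or $u_j$. Thus the closed walk is a simple cycle containing both $e_i$ and $e_j$, which places them in the same biconnected component.

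The only mildly delicate point in the argument is verifying the internal node-disjointness of the two extracted subpaths; once that is correctly set up, everything else (acyclicity ruling out $r$ on the paths, $u_i\neq u_j$, and the block-versus-simple-cycle equivalence) is straightforward. Since the statement is symmetric in any pair of incoming edges, this shows all incoming edges of $r$ lie in a common biconnected component.
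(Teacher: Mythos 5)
Your proof is correct, and it is essentially the argument the paper has in mind: the paper omits the proof of this lemma entirely, remarking only that it follows by the same technique as Lemma~\ref{lem:proofOneRoot}, namely merging two directed root-to-parent paths at their lowest common node to obtain a simple undirected cycle through both reticulation edges and concluding via biconnectivity. Your write-up is in fact more explicit than what the paper provides, since you carefully verify the near-disjointness of the two subpaths (including the degenerate cases where one subpath is trivial), use acyclicity to keep $r$ off them, and invoke the standard characterization of blocks by common simple cycles.
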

\begin{proof}
The lemma can be proven by applying a similar \snew{argument to that used} in the proof of Lemma \ref{lem:proofOneRoot}. We therefore omit the proof. 
\end{proof}

We define the switchings of a biconnected component~$B$ of~$N$ analogous to the definition of switchings of a network, i.e. a \emph{switching} of a biconnected component~$B$ is a rooted tree~$S_B$ that can be obtained from~$B$ by deleting all but one of the incoming edges of each reticulation. We say that we \emph{apply} \snew{switching} $S_B$ to $N$ when deleting in $N$ all edges of $B$ not in $S_B$. The next result is a consequence of Lemma \ref{lem:proofOneComponent}.

\begin{lemma}
Let $N$ be a rooted phylogenetic network on $X$ and $S$ a switching of $N$. Then $S$ can be obtained from $N$ by, for each biconnected component $B$ of $N$, first choosing a switching $S_B$ and then applying it to $N$.
\label{lem:partialSwitchings}
\end{lemma}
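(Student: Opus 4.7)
The strategy is to decompose the global switching choices recorded in $T_N$ into independent local choices, one per biconnected component, with Lemma~\ref{lem:proofOneComponent} providing the crucial disjointness that makes this decomposition unambiguous. The argument is largely bookkeeping: identify the local object, check it is a legitimate switching of $B$, and check that reassembling all local objects recovers $T_N$.

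First I would start with the given switching $T_N$ of $N$ and record, for every reticulation $r$ of $N$, the unique incoming edge $e_r$ preserved in $T_N$; the remaining incoming edges of $r$ are precisely the ones deleted when passing from $N$ to $T_N$. By Lemma~\ref{lem:proofOneComponent}, all incoming edges of $r$ lie in a single biconnected component of $N$, which I call $B(r)$; in particular $e_r \in E(B(r))$. For each biconnected component $B$ of $N$, I then define $T_B$ to be the subgraph of $B$ obtained by deleting, for every reticulation $r$ with $B(r)=B$, all incoming edges of $r$ other than $e_r$. Because the reticulations $r$ with $B(r)=B$ are exactly the reticulations of $B$ (again using Lemma~\ref{lem:proofOneComponent}, so no incoming edges of $r$ can leak out of $B$), this is precisely the operation in the definition of a switching of a biconnected component, so each $T_B$ is a valid switching of $B$.

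It remains to show that applying all the $T_B$'s to $N$ yields exactly $T_N$. An edge $e$ of $N$ is deleted when constructing $T_N$ iff $e$ is an incoming edge $e'\neq e_r$ of some reticulation $r$; such an $e$ lies in $B(r)$ and is deleted when $T_{B(r)}$ is applied. Conversely, any edge removed by the application of some $T_B$ is, by construction, an incoming edge of a reticulation $r$ with $B(r)=B$ distinct from $e_r$, and is therefore absent from $T_N$. Hence the two edge sets coincide, and since no node deletions occur, the graph obtained by successively applying all the $T_B$'s is identical to $T_N$. The main (and really only) obstacle is ensuring that the phrase ``switching of $B$'' lines up with the local restriction of $T_N$; this is exactly the content of Lemma~\ref{lem:proofOneComponent}, without which a reticulation could straddle several biconnected components and no clean local-to-global correspondence would exist.
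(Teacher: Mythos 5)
Your proof is correct and follows exactly the route the paper intends: the paper gives no detailed proof, stating only that the lemma ``is a consequence of Lemma~\ref{lem:proofOneComponent},'' and your argument is precisely the careful elaboration of that remark — using the fact that all incoming edges of a reticulation lie in one biconnected component to localise the edge-deletion choices and then verifying the deleted edge sets coincide. Nothing is missing.
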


\begin{corollary}
Let $N$ be a rooted phylogenetic network on $X$, $S$ a switching of $N$ and $B$ a biconnected component of $N$. Let $S_B$ be the switching of $B$ \snew{induced by $S$} and let $S'_B$ be a different switching of $B$. Let $S'$ be the graph obtained from $N$ by applying to $N$ 
all switchings of all biconnected components \snew{induced by $S$} except $S_B$, \snew{and then finally applying the switching $S'_B$}. Then $S'$ is a switching of $N$.
\label{cor:partialSwitchings}
\end{corollary}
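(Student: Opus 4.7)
The plan is to deduce the corollary quickly from Lemma \ref{lem:proofOneComponent} together with Lemma \ref{lem:partialSwitchings}. By definition, to verify that $T'_N$ is a switching of $N$ we must check that exactly one incoming edge survives at each reticulation of $N$ (and no other edges are removed).

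First I would invoke Lemma \ref{lem:proofOneComponent} to observe that every reticulation $r$ of $N$ has all of its incoming edges inside a single biconnected component, which I will denote $B(r)$. Consequently, the incoming edges of $r$ are affected only by the switching chosen for $B(r)$, and are untouched by the switchings chosen for the other biconnected components. This makes the choices of switchings in different biconnected components independent of one another.

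Next I would case-split on each reticulation $r$ of $N$. If $B(r)\neq B$, then the switching applied to $B(r)$ in the construction of $T'_N$ is exactly the one induced by $T_N$, which by definition retains precisely one incoming edge of $r$. If $B(r)=B$, then the switching $T'_B$ is applied, which by definition of a switching of a biconnected component retains precisely one incoming edge of $r$. In both cases, exactly one incoming edge of $r$ survives in $T'_N$.

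Finally, I would appeal to Lemma \ref{lem:partialSwitchings}, which states that any switching of $N$ can be realised as a combination of switchings chosen independently for each biconnected component. Since $T'_N$ has been obtained in precisely this manner — choosing a switching for every biconnected component and applying them all — $T'_N$ is a switching of $N$, as required. The argument is essentially a bookkeeping one; the only substantive ingredient is the observation, already guaranteed by Lemma \ref{lem:proofOneComponent}, that the switching decisions in distinct biconnected components cannot interfere with each other.
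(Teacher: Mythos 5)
Your argument is correct and is exactly the intended one: the paper states this corollary without an explicit proof, presenting it (like Lemma~\ref{lem:partialSwitchings}) as a consequence of Lemma~\ref{lem:proofOneComponent}, and your case split on whether a reticulation's component is $B$ or not, using the fact that all incoming edges of a reticulation lie in one biconnected component, is precisely the verification needed. One small logical caveat: Lemma~\ref{lem:partialSwitchings} asserts only that every switching of $N$ \emph{can be obtained} from component-wise switchings, not the converse that every component-wise combination \emph{is} a switching, so your closing appeal to it does not by itself finish the job --- but this is harmless, since your preceding case analysis already verifies directly that $T'_N$ retains exactly one incoming edge at every reticulation and hence satisfies the definition of a switching.
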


We are now ready to describe our algorithm for computing the softwired parsimony score of a phylogenetic network~$N$ and $p$-state character~$\alpha$. Note that each cut-edge $(u,v)$ is seen as a biconnected component with root~$u$ and only one switching.

\begin{algorithm}[]
\lFor{each \snew{node} $v$ of $N$ and state $s$ in $\mathcal{P}$}{ $w(v,s) \gets 0$\;}
\For{each \snew{node} $r$ of $N$  that is a root of a least one biconnected component in post-order}{
\For{each biconnected component $B_{r}$ rooted at $r$}{
\For{each switching $S_{r}$ of $B_{r}$}{
compute $PS_{sw}(S_{r}, \alpha, w, \cdot)$ using Algorithm \ref{alg:PSw}\;
}
\For{each $s$ in $\mathcal{P}$}{
$PS_{sw}(B_{r}, \alpha,w,s) = \displaystyle\min_{S_{r} \in \mathcal{B}_{r}} PS_{sw}(S_{r}, \alpha, w,s)$\;
$w(r,s) \gets w(r,s) + PS_{sw}(B_{r}, \alpha, w,s)$\;
}
}
}
\Return $\displaystyle\min_{s\in \mathcal{P}} w(root(N),s)$\;
\caption{
\snew{Compute} $PS_{sw}(N,\alpha)$}\label{algo:FPT}

\end{algorithm}

\begin{theorem} \label{softwired_levelFPT_thm}
Computing the softwired parsimony score of a rooted phylogenetic network $N$ and a $p$-state character, for any~$p\in\NN$, is fixed-parameter tractable if the parameter is the level of the network.
\end{theorem}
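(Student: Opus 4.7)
The plan is to establish correctness of Algorithm~\ref{algo:FPT} by induction on the block-cut-tree structure of~$N$, and to bound its running time as $f(k)\cdot\mathrm{poly}(|V(N)|,p)$, where $k$ is the level of $N$ and $f$ depends only on~$k$.

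For correctness, I would exploit Lemma~\ref{lem:proofOneRoot}, which assigns every biconnected component~$B$ a unique root~$r_B$. This legitimises the post-order traversal: by the time Algorithm~\ref{algo:FPT} reaches~$B_r$, every biconnected component rooted strictly below~$r$ in the block-cut-tree has already been processed. The DP invariant I would maintain is that, once all biconnected components rooted at~$r$ have been processed, $w(r,s)$ equals the minimum of $\sum_{e\in E(S')} c_{\tau'}(e)$ taken over all switchings~$S'$ of the subnetwork~$N_r$ of~$N$ rooted at~$r$ and all extensions~$\tau'$ of~$\alpha$ to~$V(S')$ with $\tau'(r)=s$. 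Granting the invariant at the root of~$N$, the returned value $\min_{s\in\mathcal{P}} w(\text{root}(N),s)$ equals $PS_{\cS}(N,\alpha)$, which in turn equals $PS_\text{sw}(N,\alpha)$ by Lemma~\ref{lem:equivalenceDef}.

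The heart of the inductive step rests on Lemmas~\ref{lem:proofOneComponent} and~\ref{lem:partialSwitchings}: any switching of~$N_r$ corresponds to an independent choice of switching in each biconnected component of~$N_r$, and for any fixed extension~$\tau'$ the parsimony cost splits as a sum over those components. Fix a biconnected component~$B_r$ rooted at~$r$ together with a switching~$T_r$ of~$B_r$. Its ``leaves'' are either labelled leaves of~$N$ lying inside~$B_r$, or cut vertices~$v$ that are roots of biconnected components strictly below~$r$; by the inductive hypothesis, $w(v,s')$ already encodes the optimal cost of the subnetwork rooted at~$v$ conditional on $\tau'(v)=s'$. Lemma~\ref{lem:alg1correct} then ensures that Algorithm~\ref{alg:PSw}, run on~$T_r$ with this weight function, correctly computes the minimum joint cost of the edges within~$T_r$ together with the contributions from the subnetworks hanging below each leaf, conditioned on $\tau'(r)=s$. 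Taking the minimum over switchings of~$B_r$ and combining across all biconnected components rooted at~$r$ via Corollary~\ref{cor:partialSwitchings} (which certifies that switching choices in distinct components are made independently) yields the required aggregated value of~$w(r,s)$.

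For the running time the key observation is that a biconnected component of reticulation number at most~$k$ has at most~$2^k$ switchings: if its reticulations have in-degrees $d_1,\ldots,d_m$, then there are $\prod_i d_i$ switchings while the reticulation number equals $\sum_i(d_i-1)$, and the elementary inequality $d\le 2^{d-1}$ valid for every $d\ge 2$ yields $\prod_i d_i \le 2^{\sum_i(d_i-1)}\le 2^k$. Since Algorithm~\ref{alg:PSw} runs in $O(p^2|V(B)|)$ per switching and the total size $\sum_B|V(B)|$ over all biconnected components is only $O(|V(N)|)$, the overall running time of Algorithm~\ref{algo:FPT} is $O(2^k\cdot p^2\cdot |V(N)|)$, which is FPT in~$k$. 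The main technical subtlety I anticipate is the inductive step when a single node is simultaneously the root of several biconnected components: one must verify that the successive updates of $w(r,s)$ truly aggregate the contributions of those components without double-counting, and this is exactly the role played by Corollary~\ref{cor:partialSwitchings}, which certifies independence of the switching choices across different components.
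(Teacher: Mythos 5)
Your proof is correct and follows essentially the same route as the paper's: the same decomposition into biconnected components via Lemmas \ref{lem:proofOneRoot}--\ref{lem:partialSwitchings}, the same weighted tree DP invariant $w(r,s)=PS_{sw}(N_r,\alpha,s)$ propagated by induction up the block tree, and the same $2^k$ bound on switchings per component (your explicit inequality $\prod_i d_i \le 2^{\sum_i (d_i-1)} \le 2^k$ is a welcome justification of that bound for nonbinary components, which the paper only asserts). The only minor imprecision is that $\sum_B |V(B)|$ is $O(|E(N)|)$ rather than $O(|V(N)|)$ for nonbinary networks, but this does not affect the fixed-parameter tractability conclusion.
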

\begin{proof}
In Lemma \ref{lem:proofOneRoot}, we proved that each biconnected component $B$ of $N$ contains only one root $r_B$. We denote by $BT(N)$ the graph obtained as follows: we create a node $v_r$ in $BT(N)$ for each \snew{node}~$r$ of~$N$ that is the root of at least one biconnected component of $N$, and an edge $(v_r,v_{r'})$ in $BT(N)$ if $r'\neq r$ and~$r'$ is contained in a biconnected component rooted at~$r$. It is easy to see that $BT(N)$ is connected. Moreover, it cannot contain any reticulation because of Lemma \ref{lem:proofOneComponent}. Thus $BT(N)$ is a tree on $X$. 
In the following we shall prove that $PS_{sw}(N,\alpha)= \min_{s\in \mathcal{P}} w(root(N),s)$ {with~$w$ the weight function computed by Algorithm~\ref{algo:FPT}}. 

Denote by $PS_{sw}(N, \alpha,s)$ the minimum parsimony score for $N$ and $\alpha$, with the restriction that the root
of $N$ must be labelled with character state $s$. \cs{Let  $N_r$ be the subnetwork comprising all biconnected components whose roots can be reached by
directed paths from $r$}. We will prove that  $PS_{sw}(N_r, \alpha,s)=w(r,s)$ for any \snew{node} $r$  in $V(N)$ associated to a \snew{node} $v_r$ in $BT(N)$. We prove this equality by induction on the height of~$v_r$, which is defined as the length of a longest path from~$v_r$ to a leaf of~$BT(N)$.

We begin by proving that the equality is true when the height of~$v_r$ is 0. Suppose that $r$ is the root of $J$ different non trivial biconnected components and let $B_j$ be one of these. Then we have that:

\noindent 
$PS_{sw}(B_j,\alpha,w,s)=\displaystyle\min_{S_{j} \in \mathcal{B}_{j} } PS_{sw}(S_{j}, \alpha, w,s)=\min_{S_{j}  \in \mathcal{B}_{j} } PS_{sw}(S_{j}, \alpha,s)=PS_{sw}(B_j,\alpha,s)$,

 \noindent where the second equivalence holds because of Lemma \ref{lem:DP_trees}, since $w(v,s)$ is equal to zero for all nodes of $B_j$ and $s$ in $\mathcal{P}$. 
 Then, because of Lemma \ref{lem:proofOneComponent}, we have that:
 
  $w(r,s) = \sum_{B_j \in \mathcal{B}_r} PS_{sw}(B_j,\alpha,w,s) = \sum_{B_j \in \mathcal{B}_r} PS_{sw}(B_j,\alpha,s)=PS_{sw}(N_r,\alpha,s) $,
 
\noindent where $\mathcal{B}_r$ is the set of biconnected components rooted at $r$.

Suppose now that $w(r,s)=PS_{sw}(N_r,\alpha,s) $ is true for all nodes $v_r$ of $BT(N)$ with height at most~$h$. We want to prove that this holds also for nodes with height $h+1$. Let $v_r$ be such a node, $r$ the associated node in $N$ and let $B_j$ a biconnected component rooted at $r$. 
Let  $N_r(B_j)$ denote the subnetwork of $N_r$ where edges not reachable from $B_j$ are deleted. 
Then, when we loop through all switchings of $B_j$, we can use the same kind of reasoning as before to prove that  $PS_{sw}(B_j,\alpha,w,s)=PS_{sw}(N_r(B_j),\alpha,s)$. The idea is that, once a subnetwork has been processed, its influence in the biconnected component above it is expressed using the $w$ function. This implies that the claim holds, since $PS_{sw}(N_r,\alpha,s) = \sum_{B_j \in \mathcal{B}_r} PS_{sw}(N_r(B_j),\alpha,s)=\sum_{B_j \in \mathcal{B}_r} PS_{sw}(B_j,\alpha,w,s)=w(r,s)$. 
 
We still need to prove the running time. Algorithm \ref{alg:PSw} has a running time of $O(p^2|V(T)|)$. Moreover, for each biconnected component we call Algorithm~\ref{alg:PSw} for at most $2^k$ trees with at most $|V(N)|$ nodes. 
Moreover, by Lemma \ref{lem:proofOneComponent}, we have that the number of 
biconnected components of $N$ is at most $|E(N)|$. Then we have an overall complexity of $O(2^kp^2 |V(N)| \cdot |E(N)|)$. This concludes the proof.  
\end{proof}
 
\section{Maximum Parsimony in Practice: Integer Linear Programming}\label{ilp_sec}

We propose the following integer linear programming (ILP) formulation for computing the hardwired parsimony score of a phylogenetic network, with \snew{node} set~$V$ and edge set~$E$, and a $p$-state character~$\alpha$. All variables are binary. Variable~$x_{v,s}$ indicates whether or not \snew{node}~$v$ has character state~$s$ and variable~$c_e$ indicates if there is a change on edge~$e$ or not. For a leaf~$v$, parameter~$\alpha(v)$ is the given character state at~$v$. Let~$\cP=\{1,\ldots ,p\}$.
\begin{align*}
\min & \sum_{e\in E} c_e &\\
\text{s.t. }
& \sum_{s\in \cP} x_{v,s} = 1							&&\text{for all } v\in V\\
& c_e \geq x_{u,s} - x_{v,s} 						&&\text{for all } e=(u,v)\in E, s\in \cP \\
& c_e \geq x_{v,s} - x_{u,s} 						&&\text{for all } e=(u,v)\in E, s\in \cP \\
& x_{v,\alpha(v)}=1 								&&\text{for each leaf } v \\
& c_e\in\{0,1\} 									&&\text{for all } e\in E \\         
& x_{v,s}\in\{0,1\}     								&&\text{for all } v\in V, s\in \cP 
\end{align*}
To see the correctness of the formulation, first observe that the first constraint ensures that each \snew{node} is assigned exactly one character state. Now consider an edge~$e=(u,v)$ and suppose that~$u$ and~$v$ are assigned different states~$s$ and~$s'$. Then $x_{u,s} \neq x_{v,s}$ (and $x_{u,s'} \neq x_{v,s'}$) and hence the second and third constraint ensure that $c_e=1$.

For the softwired parsimony score, we extend the ILP formulation as follows. In addition to the variables above, there is a binary variable~$y_e$ indicating if edge~$e$ is switched ``on'' or ``off''. A change on edge~$e$ is only counted if it is switched on. For each reticulation, exactly one incoming edge is switched on.
\begin{align*}
\min & \sum_{e\in E} c_e &\\
\text{s.t. }
& \sum_{s\in \cP} x_{v,s} = 1							&&\text{for all } v\in V\\
& c_e \geq x_{u,s} - x_{v,s} - (1-y_e)				&&\text{for all } e=(u,v)\in E, s\in \cP \\
& c_e \geq x_{v,s} - x_{u,s} - (1-y_e)				&&\text{for all } e=(u,v)\in E, s\in \cP \\
& \sum_{v:(v,r)\in E} y_{(v,r)} = 1 					&&\text{for each reticulation } r \\
& y_e=1 											&&\text{for each tree-edge } e \\
& x_{v,\alpha(v)}=1 								&&\text{for each leaf } v \\
& c_e,y_e\in\{0,1\} 								&&\text{for all } e\in E \\         
& x_{v,s}\in\{0,1\}     								&&\text{for all } v\in V, s\in \cP 
\end{align*}
It follows from Lemma~\ref{lem:equivalenceDef} that the optimum value of this ILP is equal to the softwired parsimony score of the given network and character.

Note that the parsimony score of an alignment can be computed by solving the above ILP formulation for each column (character) separately, or combining them in a single ILP. Gaps in the alignment can be accommodated in the formulation by demanding that $x_{v,\alpha(v)}=1$ only for leaves $v$ for which $\alpha(v)$ is not a gap.

We have implemented both ILP \leo{formulations} and made the resulting user-friendly software publicly available~\cite{MPNet}. Experimental results with CPLEX 12.5 on a 2Ghz laptop are in Table~\ref{tab:exp}. Networks were simulated using Dendroscope~\cite{Dendroscope3} and character-states were assigned uniformly at random.

\begin{table}
\begin{tabular}{c|c|c|c|c|c|c|c|c|}
\cline{2-9}
&   & avg. &\multicolumn{6}{|c|}{Average computation time (s)} \\
& $|\mathcal{X}|$ & number of & \multicolumn{3}{|c|}{Hardwired PS} & \multicolumn{3}{|c|}{Softwired PS}\\
& & retic. & 2-state & 3-state & 4-state & 2-state & 3-state & 4-state\\
\hline
Run 1 & 50 & 17.0 & 0.0 & 0.0 & 0.1 & 0.1 & 0.1 & 0.3\\
Run 2 & 100 & 37.0 & 0.0 & 0.0 & 0.2 & 0.0 & 0.1 & 0.6\\
Run 3 & 150 & 54.1 & 0.0 & 0.1 & 0.6 & 0.1 & 0.2 & 0.8\\
Run 4 & 200 & 72.8 & 0.0 & 0.1 & 1.1 & 0.1 & 0.4 & 1.4\\
Run 5 & 250 & 91.3 & 0.0 & 0.1 & 3.5 & 0.1 & 0.4 & 2.2\\
Run 6 & 300 & 112.6 & 0.0 & 0.2 & 5.2 & 0.1 & 0.6 & 3.7\\
\hline
\end{tabular}
\caption{Time needed to compute hardwired and softwired parsimony scores in six test runs. For each run, an average is taken over 10 simulated networks with randomly assigned character states.\label{tab:exp}}
\end{table}

For practical applications, parsimony scores have to be computed quickly since this computation needs to be repeated many times, for example when searching for a network with smallest parsimony score. Apparent from Table~\ref{tab:exp} is that parsimony scores can be computed very quickly using ILP for networks with up to 100-150 taxa and up to 50 reticulations. Moreover, parsimony scores can even be computed quickly for much larger networks in the case of binary and ternary characters. This is of interest because, in practice, many columns of an alignment might contain only two or three different symbols. Despite the theoretical differences in tractability of hardwired and softwired parsimony scores, their computation times using ILP do not differ much in these experiments.

\section{Conclusions and Open Problems}\label{conclusion_sec}

\begin{table}
\begin{tabular}{|c||c|c|}
\hline
& Hardwired & Softwired\\
\hline \hline
\multirow{2}{*}{Complexity} & In P for $p=2$                                & \multirow{2}{*}{NP-hard for $p\geq 2$}\\
                                       & NP-hard for~$p\geq 3$                       & \\
\hline
\multirow{2}{*}{Approximation} & $\frac{12}{11}$-approx. for $p=3$ & \steven{no $|X|^{1-\epsilon}$-approx.}\\
                                           & 1.3438-approx. for~$p\geq 4$       & \steven{for any $\epsilon > 0$} unless ~P~=~NP\\
\hline
Parameterized by PS               & FPT                                           & NP-hard to decide if PS=1\\
\hline
Parameterized by level             & \cyan{N/A}                                           & FPT\\
\hline
\end{tabular}
\caption{Summary of the complexity of computing hardwired and softwired parsimony scores of phylogenetic networks.\label{tab:sum}}
\end{table}

We have clarified the distinction between two possible definitions of the parsimony score of a phylogenetic network, which we call the ``softwired" and the ``hardwired'' parsimony score. We have shown that computing the hardwired parsimony score is, in various ways, more tractable than computing the softwired score, see Table~\ref{tab:sum}. We have also shown that the intractability results still hold under several topological restrictions.
% However, it} remains unclear if there exists an \sbsteven{$O( |X|^{O(1)}) $}-approximation
% for computing the softwired parsimony score of a binary rooted phylogenetic network.
\sbsteven{A stimulating open} question is to determine the (in)approximability and fixed-parameter tractability of computing the softwired parsimony score of a character on a rooted network that is simultaneously binary and tree-child: \sbsteven{this might be
considerably more tractable than other versions of the problem.} From a practical point of view, we have shown that both the hardwired and softwired parsimony score can be computed efficiently using ILP. It will be interesting to explore -- in the spirit of studies such as \leo{those} conducted by 
\cite{MPcaseStudy2007} and \cite{jin2009parsimony} -- the extension of this work to the notoriously intractable ``big parsimony" problem.

\section*{Acknowledgements}
We thank Mike Steel for useful discussions on the topic of this paper. This work has been partially funded by the ANCESTROME project ANR-10-IABI-0-01. This publication is the contribution no. 2013-\textcolor{black}{XXX} of the Institut des Sciences de l'Evolution de Montpellier (ISE-M, UMR 5554). Leo van Iersel was funded by a Veni grant of the Netherlands Organisation for Scientific Research (NWO).

\bibliographystyle{plain}
  \bibliography{bibliographyleo}

\begin{thebibliography}{10}

\bibitem{AlonApproxMP}
Noga Alon, Benny Chor, Fabio Pardi, and Anat Rapoport.
\newblock Approximate maximum parsimony and ancestral maximum likelihood.
\newblock {\em IEEE/ACM Trans. Comput. Biol. Bioinformatics}, 7(1):183--187,
  January 2010.

\bibitem{arnold_hybrid}
M.L. Arnold.
\newblock {\em Natural Hybridization and Evolution}.
\newblock Oxford University Press, New York, 1996.

\bibitem{BSS06}
Mihaela Baroni, Charles Semple, and Mike Steel.
\newblock Hybrids in real time.
\newblock {\em Systematic Biology}, 55:46--56, 2006.

\bibitem{bogart_hybrid}
J.P. Bogart.
\newblock {\em Genetics and systematics of hybrid species}, pages 109--134.
\newblock Science Publishers, Inc., Enfield, New Hampshire, 2003.

\bibitem{MulticutFPT1}
Nicolas Bousquet, Jean Daligault, and St{\'e}phan Thomass{\'e}.
\newblock Multicut is fpt.
\newblock In {\em Proceedings of the 43rd annual ACM symposium on Theory of
  computing}, STOC '11, pages 459--468, New York, NY, USA, 2011. ACM.

\bibitem{Dahlhaus94thecomplexity}
E.~Dahlhaus, D.~S. Johnson, C.~H. Papadimitriou, P.~D. Seymour, and
  M.~Yannakakis.
\newblock The complexity of multiterminal cuts.
\newblock {\em SIAM J. Comput.}, 23(4):864--894, August 1994.

\bibitem{EdwardsCavalli-Sforza64}
A.~W.~F. Edwards and L.~L. Cavalli-Sforza.
\newblock {\em {Reconstruction of evolutionary trees}}, pages 64--76.
\newblock Systematics Association Publ., No. 6, London, 1964.

\bibitem{felsenstein_1978}
J.~Felsenstein.
\newblock Cases in which parsimony or compatibility will be positively
  misleading.
\newblock {\em Syst. Zool.}, 27:401--410, 1978.

\bibitem{fitch_1971}
W.~Fitch.
\newblock Toward defining the course of evolution: minimum change for a
  specific tree topology.
\newblock {\em Syst. Zool.}, 20(4):406--416, 1971.

\bibitem{Flum2006}
J.~Flum and M.~Grohe.
\newblock {\em Parameterized Complexity Theory}.
\newblock Springer, 2006.

\bibitem{foulds_1982}
L.R. Foulds and R.L. Graham.
\newblock The steiner problem in phylogeny is np-complete.
\newblock {\em Advances in Applied Mathematics}, 3:43--49, 1982.

\bibitem{HusonRuppScornavacca10}
D.H. Huson, R.~Rupp, and C.~Scornavacca.
\newblock {\em Phylogenetic Networks: Concepts, Algorithms and Applications}.
\newblock Cambridge University Press, 2011.

\bibitem{Dendroscope3}
D.H. Huson and C.~Scornavacca.
\newblock Dendroscope 3 - a program for computing and drawing rooted
  phylogenetic trees and networks.
\newblock {\em Systematic Biology}, 61(6):1061--1067, 2012.

\bibitem{MPNet}
L.J.J.~van Iersel, M.~Fischer, S.M. Kelk, and C.~Scornavacca.
\newblock {MPNet: Maximum Parsimony on Networks}, 2013.
\newblock \url{http://homepages.cwi.nl/~iersel/MPNet/}.

\bibitem{twotrees}
L.J.J.~van Iersel and S.M. Kelk.
\newblock When two trees go to war.
\newblock {\em Journal of Theoretical Biology}, 269(1):245--255, 2011.

\bibitem{ISS2010b}
L.J.J.~van Iersel, C.~Semple, and M.~Steel.
\newblock Locating a tree in a phylogenetic network.
\newblock {\em Information Processing Letters}, 110(23):1037--1043, 2010.

\bibitem{jin2009parsimony}
G.~Jin, L.~Nakhleh, S.~Snir, and T.~Tuller.
\newblock Parsimony score of phylogenetic networks: Hardness results and a
  linear-time heuristic.
\newblock {\em Computational Biology and Bioinformatics, IEEE/ACM Transactions
  on}, 6(3):495--505, 2009.

\bibitem{MPcaseStudy2007}
Guohua Jin, Luay Nakhleh, Sagi Snir, and Tamir Tuller.
\newblock Inferring phylogenetic networks by the maximum parsimony criterion: A
  case study.
\newblock {\em Molecular Biology and Evolution}, 24(1):324--337, 2007.

\bibitem{clustercontainment}
I.A. Kanj, L.~Nakleh, C.~Than, and G.~Xia.
\newblock Seeing the trees and their branches in the network is hard.
\newblock {\em Theoretical Computer Sciences}, 401:153--164, 2008.

\bibitem{kannan2012maximum}
L.~Kannan and W.C. Wheeler.
\newblock Maximum parsimony on phylogenetic networks.
\newblock {\em Algorithms for Molecular Biology: AMB}, 7:9, 2012.

\bibitem{KargerSTOC09}
David~R. Karger, Philip Klein, Cliff Stein, Mikkel Thorup, and Neal~E. Young.
\newblock Rounding algorithms for a geometric embedding of minimum multiway
  cut.
\newblock In {\em Proceedings of the thirty-first annual ACM symposium on
  Theory of computing}, STOC '99, pages 668--678, New York, NY, USA, 1999. ACM.

\bibitem{elusiveness}
S.M. Kelk, C.~Scornavacca, and L.J.J.~van Iersel.
\newblock On the elusiveness of clusters.
\newblock {\em IEEE/ACM Transactions on Computational Biology and
  Bioinformatics}, 9(2):517--534, 2012.

\bibitem{KelkScornavacca2011}
Steven Kelk and Celine Scornavacca.
\newblock Constructing minimal phylogenetic networks from softwired clusters is
  fixed parameter tractable.
\newblock {\em Algorithmica}, pages 1--30, 2012.

\bibitem{koonin_HGT}
E.V. Koonin, K.S. Makarova, and L.~Aravind.
\newblock Horizontal gene transfer in prokaryotes: quantification and
  classification.
\newblock {\em Annu. Rev. Microbiol.}, 55:709--742, 2001.

\bibitem{treeoflife}
D.~R. Maddison and K.-S.~(eds) Schulz.
\newblock The tree of life web project.
\newblock \url{http://tolweb.org}, 2007.

\bibitem{MulticutFPT2}
D\'{a}niel Marx and Igor Razgon.
\newblock Fixed-parameter tractability of multicut parameterized by the size of
  the cutset.
\newblock In {\em Proceedings of the 43rd annual ACM symposium on Theory of
  computing}, STOC '11, pages 469--478, New York, NY, USA, 2011. ACM.

\bibitem{mcdaniel_HGT}
L.~McDaniel, E.~Young, J.~Delaney, F.~Ruhnau, K.~Ritchie, and J.~Paul.
\newblock High frequency of horizontal gene transfer in the oceans.
\newblock {\em Science}, 330:6000:50, 2010.

\bibitem{davidbook}
D.A. Morrison.
\newblock {\em Introduction to phylogenetic networks}.
\newblock RJR Productions, Uppsala, 2011.

\bibitem{Nakhleh2009ProbSolv}
L.~Nakhleh.
\newblock {\em The Problem Solving Handbook for Computational Biology and
  Bioinformatics}, chapter Evolutionary phylogenetic networks: models and
  issues.
\newblock Springer, 2009.

\bibitem{nguyen2007reconstructing}
C.T. Nguyen, N.B. Nguyen, W.K. Sung, and L.~Zhang.
\newblock Reconstructing recombination network from sequence data: The small
  parsimony problem.
\newblock {\em Computational Biology and Bioinformatics, IEEE/ACM Transactions
  on}, 4(3):394--402, 2007.

\bibitem{niedermeier2006}
R.~Niedermeier.
\newblock {\em {Invitation to Fixed Parameter Algorithms (Oxford Lecture Series
  in Mathematics and Its Applications)}}.
\newblock {Oxford University Press, USA}, March 2006.

\bibitem{sankoff_75}
D.~Sankoff and P.~Rousseau.
\newblock Locating the vertices of a steiner tree in an arbitrary metric space.
\newblock {\em Mathematical Programming}, 9(1):240--246, 1975.

\end{thebibliography}

\clearpage

\appendix

\section{Transforming degree-2 nodes (from Corollary \ref{hardwired_NP_cor})\label{app:transf}}

Although the hardwired parsimony score naturally extends to them, degree-2 nodes are not formally part of
our phylogenetic network model. Fortunately, degree-2 nodes can simply be suppressed without altering the hardwired
parsimony score. Unfortunately 
this may in turn create multi-edges which are likewise excluded from our definition. To deal with this, a multi-edge with multiplicity
$t \geq 2$ between two nodes $u$ and $v$ can be encoded within the degree restrictions of a phylogenetic network by using a specific gadget. Namely, group the edges into $t' = \lfloor t/2 \rfloor$ pairs and for each pair $P_i$ ($1 \leq i \leq t'$) (i) delete the
two edges concerned (ii) add two new nodes $x_i, y_i$ and (iii) add the edges $(u,x_i), (u,y_i), (x_i,y_i),(x_i,v),(y_i,v)$.
(If $t$ is odd the remaining edge can simply remain intact). Again, this does not
alter the hardwired parsimony score. In fact, both transformations also leave the cut properties of the graph unchanged,
which is important for the proof of Corollary \ref{hardwired_NP_cor}.

\section{Proof of Lemma~\ref{lem:equivalenceDef}}

\textbf{Lemma~\ref{lem:equivalenceDef}.}
\emph{Consider a rooted phylogenetic network~$N$ on~$X$ and a $p$-state character~$\alpha$ on~$X$. Then}
\[
PS_{\cS}(N,\alpha) = PS_\text{sw}(N,\alpha).
\]
\begin{proof}
Let $S$ be a switching of $N$ and  $\tau$  an extension of~$\alpha$ to~$V(S)$ -- or equivalently to $V(N)$ -- such that $\sum_{e\in E(S)} c_\tau(e)=PS_{\cS}(N,\alpha)$. Let $T$ be the tree obtained from $S$ by deleting indegree-0 outdegree-1 nodes, deleting unlabelled outdegree-0 nodes and suppressing indegree-1 outdegree-1 nodes and let $\tau'$ be the restriction of $\tau$ to the nodes of $S$ still present in $T$. By construction we have that, since $S$ is a switching of  $N$ on $X$, and $T$ has been obtained from $S$ as described above, then $T\in \cT(N)$. Moreover, since  $\tau$ is an extension of~$\alpha$ to~$V(S)$, we have that $\tau'$ is a an extension of~$\alpha$ to~$V(T)$. Finally, it is easy to see that $PS_{\cS}(N,\alpha) = \sum_{e\in E(S)} c_\tau(e) \geq \sum_{e\in E(T)} c_{\tau'}(e) \geq PS_\text{sw}(N,\alpha)$, since suppressing nodes   (and 	consequently edges) cannot increase the sum of changes on the remaining edges.   

Now, let  $T$ be a tree of $\cT(N)$ and  $\tau$  an extension of~$\alpha$ to~$V(T)$ such that $\sum_{e\in E(T)} c_\tau(e)=PS_\text{sw}(N,\alpha)$. Moreover, let $S$ be a switching corresponding to $T$, i.e. such that $T$ can be obtained from $S$ by deleting indegree-0 outdegree-1 nodes, deleting unlabelled outdegree-0 nodes and suppressing indegree-1 outdegree-1 nodes. 
(We know that such a switching exists  because  $T\in \cT(N)$). Now, let $\tau': V(S)\rightarrow \{1,...,p, ?\}$ such that $\tau'(u)=\tau(u)$ if $u\in V(T)$ \snew{(i.e. $u$ is the image in $N$ of a node of $T$)} and $\tau'(u)=\{?\}$ otherwise. 
A value in $\{1,...,p\}$ is associated to all nodes $u$ of $S$ having $\tau'(u)=\{?\}$ in the following way: We start by setting $\tau'(root(S))$ to $\tau(root(T))$ and then we traverse $S$ in preorder,  setting $\tau'(u)$ to $\tau'(u_p)$ for all nodes $u$ having $\tau'(u)=\{?\}$, where $u_p$ is the parent node of $u$. \\
\snew{
First note that the root of $T$ corresponds to the node $\rho$ of $S$ that is closest to the root with the following
property: $\rho$ has out-degree 2 or higher, and nodes not labelled $?$ can be reached from at least two children of $\rho$.}

\snew{Then we have that all edges of $S$ not reachable from $\rho$ cost 0}, since for all these edges $(u,v)$ we have  $\tau'(u)=\tau'(v)=\tau'(root(T))$. 
Now, let $e=(u,v)$ be an edge of $T$. In $S$ this edge will often correspond to a set of edges, denoted by $E_S(e)$, see Figure~\ref{fig:edgeEx}. Now, note that the value of  $\tau'(\cdot)$ is  equal to $\tau(u)$ for all  descendants of $u$ in $S$ that cannot be reached via $v$. Then it is easy to see that the cost of all edges in $E_S(e)$ equals $c_{\tau'}(w,v)=c_{\tau}(u,v)=c_\tau(e)$, where $w$ is the parent node of $v$ in $S$.	  

\begin{figure}[H]
    \centering
    \includegraphics[scale=0.6]{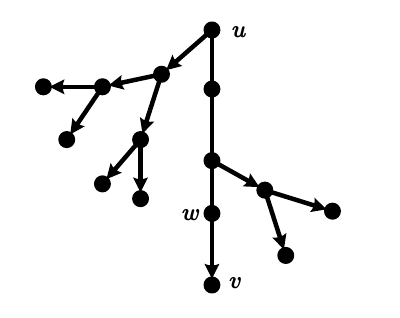}
    \caption{An example of an edge \snew{$(u,v)$} of $T$ that \snew{has been mapped to} several edges in the switching underlying $T$, used in the proof of Lemma \ref{lem:equivalenceDef}. \label{fig:edgeEx}}
\end{figure}

Since this holds for all edges of $T$, and $\tau'$ is clearly an extension of~$\alpha$ to~$V(S)$, we have that $PS_\text{sw}(N,\alpha)=  \sum_{e\in E(T)} c_{\tau}(e)   =  \sum_{e\in E(S)} c_{\tau'}(e) \geq PS_{\cS}(N,\alpha)$.
 This concludes the proof. 
\end{proof}

\end{document}